\newcommand{\hil}[1]{\ensuremath{{\cal H}_{#1}}}
\newtheorem{fact}{Fact}
\begin{document}
\title{New entropic uncertainty relations for prime power dimensions}
\author{Jakob Funder}
\institute{Dept. of Computer Science, Aarhus University}
\date{}
\maketitle
\begin{abstract}
We consider the question of entropic uncertainty relations for prime power dimensions. In order to improve upon such uncertainty relations for higher dimensional quantum systems, we derive a tight lower bound amount of entropy for multiple probability distributions under the constraint that the sum of the collision probabilities for all distributions is fixed. This is purely a classical information theoretical result, however using an interesting result by Larsen~\cite{Larsen90} allows us to connect this to an entropic uncertainty relation. 
\end{abstract}
\section{Preliminaries}
\subsection{Mutually Unbiased Bases}
Let $\ket{0}$ and $\ket{1}$ be the basis vectors for the computational basis. Then we can define the three mutually unbiased bases as
\begin{eqnarray*}
\mathfrak{B_1} &=& \lbrace |0\rangle_+, |1\rangle \rbrace_+ = \lbrace |0\rangle, |1\rangle \rbrace \\
\mathfrak{B_2} &=& \lbrace |0\rangle_\times, |1\rangle \rbrace_\times = \lbrace \frac{1}{\sqrt{2}}(|0\rangle + |1\rangle), \frac{1}{\sqrt{2}}(|0\rangle - |1\rangle) \rbrace\\
\mathfrak{B_3} &=& \lbrace |0\rangle_\Game, |1\rangle \rbrace_\Game = \lbrace \frac{1}{\sqrt{2}}(|0\rangle + i|1\rangle), \frac{1}{\sqrt{2}}(|0\rangle - i|1\rangle) \rbrace
\end{eqnarray*}
often referred to as the computational, diagonal and circular-basis, respectively, where only the first two was used in BB84. Recall that they had the interesting property that if you measured in the "wrong" basis, you'd destroy all information and gain none. This property is still true if you include third basis and can in fact be generalized. When a set of bases has this property they are called mutually unbiased, or MUBs.
\begin{definition} 
A pair of orthonormal bases $A = {\mid a_{0} \rangle, ..., \mid a_{d-1} \rangle}$ and $B = {\mid b_0 \rangle, ..., \mid a_{d-1} \rangle}$ for a $d$-dimensional complex Hilbert space are said to be unbiased iff for any basis vector $\mid a_i \rangle$ and $\mid b_j \rangle$
\begin{equation*}
|\langle a_i \mid b_j \rangle|^2 = \frac{1}{{d}}
\end{equation*}
\end{definition} 
In the case where A and B are observables for a quantum system measuring observable $A$ will, independently of the outcome, leave the state of the system in a uniform superposition of all the basis vectors of $B$.
\begin{definition} 
A set of M orthonormal bases $\lbrace A_{0}, ..., A_{M-1} \rbrace$, for a $d$ dimensional Hilbert space, are said to be mutually unbiased if, and only if, for any $i,j \in [0, ..., M-1]$, where $i \neq j$, $A_i$ and $A_j$ are unbiased. 
\end{definition} 
\subsubsection{Number of MUBs in dimension $d$}
The number of possible MUBs for a $d$-dimensional complex Hilbert space in general is an open research problem. Some of the most important results that are know will be covered in this section. For ~\cite{Englert03}.\\
For a $d$-dimensional complex Hilbert space, let $\# MUB_d$ denote the number of possible MUBs and let
\begin{equation*}
d = p_1^{N_1} \times ... \times  p_l^{N_l}
\end{equation*}
be the prime decomposition of $d$ such that $p_1^{N_1} \leq ... \leq  p_l^{N_l}$. 
\begin{fact}
\begin{equation}
\# MUB_d \geq p_1^{N_1} +1 \label{lowerbound}
\end{equation}
\end{fact}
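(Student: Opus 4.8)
The plan is to reduce the general composite‑dimension statement to the known prime‑power case by a tensor‑product (multiplicativity) argument. The strategy rests on two ingredients: the existence of a \emph{complete} set of MUBs in prime power dimension, and a construction that combines MUBs across tensor factors.

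First I would invoke the classical construction of MUBs in prime power dimension: for $q = p^N$ a prime power there exists a set of $q+1$ mutually unbiased bases, obtained via the Galois field $GF(q)$ together with the Heisenberg--Weyl group formalism (see \cite{Englert03}). This is the analytically heavy ingredient, but for the lower bound I only need \emph{existence} of $p^N + 1$ MUBs in $\hil{p^N}$, not the exact count, so I may take it as a cited fact rather than reprove it.

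Second I would prove the key multiplicativity lemma: if $\lbrace A_0, \ldots, A_{m-1} \rbrace$ is a set of $m$ MUBs for $\hil{d_1}$ and $\lbrace B_0, \ldots, B_{m-1} \rbrace$ is a set of $m$ MUBs for $\hil{d_2}$, then the pairwise tensor products $\lbrace A_0 \otimes B_0, \ldots, A_{m-1} \otimes B_{m-1} \rbrace$ form $m$ MUBs for $\hil{d_1} \otimes \hil{d_2} = \hil{d_1 d_2}$. The verification is a direct computation: a basis vector of $A_r \otimes B_r$ has the form $|a_i\rangle \otimes |b_k\rangle$, and for $r \neq s$ its overlap with a vector $|a'_j\rangle \otimes |b'_l\rangle$ of $A_s \otimes B_s$ factorizes as $|\langle a_i \mid a'_j \rangle|^2 \cdot |\langle b_k \mid b'_l \rangle|^2 = \frac{1}{d_1} \cdot \frac{1}{d_2} = \frac{1}{d_1 d_2}$, each factor being $1/d_r$ precisely because $A_r, A_s$ (and $B_r, B_s$) are drawn from distinct, hence unbiased, bases. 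The crucial bookkeeping point is that the same index $r$ labels both tensor slots, so distinctness in the product forces distinctness in each factor simultaneously.

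Finally I would combine these by induction on the number $l$ of prime factors. Applying the multiplicativity lemma to the factorization $d = p_1^{N_1} \times \cdots \times p_l^{N_l}$, the number of MUBs one can construct in $\hil{d}$ is at least $\min_{1 \le i \le l} (p_i^{N_i} + 1)$. Since the prime powers are ordered $p_1^{N_1} \le \cdots \le p_l^{N_l}$, this minimum equals $p_1^{N_1} + 1$, giving the claimed bound $\#MUB_d \geq p_1^{N_1} + 1$. The main obstacle is the prime‑power base case, whose honest proof requires the Galois‑field machinery; by contrast the tensor‑product step is a routine factorization check and the inductive combination is pure bookkeeping.
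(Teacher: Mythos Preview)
Your argument is correct and is precisely the standard proof of this lower bound (tensor‑product combination of complete MUB families in each prime‑power factor, taking the minimum count). The paper itself does not supply a proof: the statement is recorded as a ``Fact'' with references to the literature (\cite{Englert03}, \cite{bengtsson06three}), so there is nothing to compare against beyond noting that your sketch is exactly the construction those references contain.
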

\begin{fact}
\begin{equation*}
\# MUB_d \leq d + 1
\end{equation*}
\end{fact}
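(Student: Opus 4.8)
The plan is to embed the rank-one projectors associated to each basis into the real vector space of $d\times d$ Hermitian matrices and then run a dimension count. Recall that the Hermitian matrices form a real inner-product space of dimension $d^2$ under the Hilbert--Schmidt inner product $\langle X, Y\rangle = \mathrm{Tr}(XY)$, and that the traceless Hermitian matrices form a subspace $\mathcal{T}$ of dimension $d^2-1$. Given $M$ mutually unbiased bases $A_0,\dots,A_{M-1}$ with $A_k = \{|a^{(k)}_0\rangle,\dots,|a^{(k)}_{d-1}\rangle\}$, I would associate to each basis vector the traceless projector $\tilde{P}^{(k)}_i = |a^{(k)}_i\rangle\langle a^{(k)}_i| - I/d \in \mathcal{T}$, and study the subspaces $V_k = \mathrm{span}\{\tilde{P}^{(k)}_i : 0\le i\le d-1\}$.

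First I would pin down the dimension of a single $V_k$. The projectors $|a^{(k)}_i\rangle\langle a^{(k)}_i|$ satisfy $\mathrm{Tr}(|a^{(k)}_i\rangle\langle a^{(k)}_i|\cdot|a^{(k)}_j\rangle\langle a^{(k)}_j|) = \delta_{ij}$, so they are linearly independent and span a $d$-dimensional space that contains $I = \sum_i |a^{(k)}_i\rangle\langle a^{(k)}_i|$. Subtracting the trace, the $\tilde{P}^{(k)}_i$ obey the single relation $\sum_i \tilde{P}^{(k)}_i = 0$ and are otherwise independent, so $\dim V_k = d-1$.

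The heart of the argument is to show that the $V_k$ are mutually orthogonal inside $\mathcal{T}$. For $k\neq l$ and any $i,j$, expanding the inner product and using the mutual unbiasedness condition $|\langle a^{(k)}_i | a^{(l)}_j\rangle|^2 = 1/d$ gives
\begin{equation*}
\mathrm{Tr}(\tilde{P}^{(k)}_i \tilde{P}^{(l)}_j) = |\langle a^{(k)}_i | a^{(l)}_j\rangle|^2 - \frac{1}{d} - \frac{1}{d} + \frac{1}{d} = 0 .
\end{equation*}
Hence every vector of $V_k$ is orthogonal to every vector of $V_l$. Since $V_0,\dots,V_{M-1}$ are pairwise orthogonal subspaces of $\mathcal{T}$, their dimensions add, so $M(d-1) = \sum_k \dim V_k \le \dim\mathcal{T} = d^2-1 = (d-1)(d+1)$, and dividing by $d-1$ yields $\# MUB_d \le d+1$.

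I expect the main subtlety to lie not in the concluding count but in two bookkeeping points. First, one must verify that the $\tilde{P}^{(k)}_i$ span exactly a $(d-1)$-dimensional space, i.e. that $\sum_i \tilde{P}^{(k)}_i = 0$ is the \emph{only} linear relation among them; this follows from the independence of the original projectors but deserves care. Second, the orthogonality computation must be expanded carefully so that the three correction terms involving $I/d$ combine to cancel the $1/d$ coming from the overlap. Once this orthogonality relation is in hand, the bound is immediate.
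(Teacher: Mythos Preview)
Your argument is correct: this is the standard and elegant dimension-counting proof. The shifted projectors $\tilde{P}^{(k)}_i$ are traceless Hermitian, each family spans a $(d-1)$-dimensional subspace (the single relation $\sum_i \tilde{P}^{(k)}_i = 0$ being the only dependence, as you note), and the mutual-unbiasedness condition makes these subspaces pairwise orthogonal under the Hilbert--Schmidt inner product. The bound $M(d-1)\le d^2-1$ follows immediately. The orthogonality computation is also right: the four terms are $1/d - 1/d - 1/d + d/d^2 = 0$.

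As for comparison with the paper: there is nothing to compare. The paper records this bound as a \emph{Fact} in the preliminaries and offers no proof; it is quoted as background (with an implicit pointer to the survey literature, e.g.\ \cite{Englert03,bengtsson06three}) and then used only to observe that the lower bound is tight in prime-power dimension. Your write-up thus supplies what the paper omits, and it is exactly the argument one finds in the standard references. The two ``subtleties'' you flag are genuine but minor, and you have already dispatched them correctly.
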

Note that this means that (\ref{lowerbound}) is tight when $d$ is a prime power (i.e. $ d = p_1^{N_1}$). In this case explicit constructions are also known~\cite{bengtsson06three}. \\
As an example, take $d = 6$. By  (\ref{lowerbound}) there must be at least 3 MUBs\footnote{This is in fact true for all dimensions}. These have indeed been found but it is an open question as to which there are more. If any additional exists it seems unlikely they would not have been discovered after considerable numerical effort~\cite{BH07} but as it stands no one knows.
\subsubsection{Known relations for Shannon entropy} \label{knownrelations}
Maassen and Uffink~\cite{MU88} proved the following entropic uncertainty relation for the special case of 2 bases, A and B
\begin{equation*}
H(A) + H(B) \geq - 2 \ln(c)
\end{equation*}
where $c = max_{i,j} (\vert \langle a_i \mid b_j \rangle \vert)$. When A and B are unbiased we have that the relation reaches its maximum value, $$c = max_{i,j} (\vert \langle a_i \mid b_j \rangle \vert) = \frac{1}{\sqrt{d}}.$$\\
As discussed above, when $d$ is a prime power the number of mutually unbiased bases is $d+1$. When using $M \leq d+ 1$ of those MUBs it has been shown in~\cite{Azarchs04}~\cite{Ruiz95} and independently in~\cite{WYM09} that the following entropic uncertainty relation holds.\\
\begin{lemma} \label{lem:azer}
Let $A_1, ..., A_{M}$ be $M \leq d + 1$ mutually unbiased observables for a $d$-dimensional quantum system, where $d$ is a prime power. Then
\begin{eqnarray*}
\displaystyle\sum_{m=1}^{M} H(A_m) \geq  M \times \left(\ln(\kappa M) - (\kappa M-1)\times \left(\kappa M \times \frac{d+M-1}{d\times M}-1\right)\times \ln\left(\frac{\kappa M}{\kappa M-1}\right)\right)
\end{eqnarray*}
where
\begin{equation*}
\kappa= \left\lceil \frac{d}{d+M-1}\right\rceil
\end{equation*}
\end{lemma}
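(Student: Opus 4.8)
The plan is to prove the statement in two stages: a short \emph{quantum} stage that converts the left-hand side into a constraint on classical collision probabilities, and a \emph{classical} stage that lower-bounds a sum of Shannon entropies subject to that constraint. Fix a state $|\psi\rangle$ and write $p_{m,i}=|\langle a^{(m)}_i|\psi\rangle|^2$ for the distribution obtained by measuring in $A_m$, with collision probability $C_m=\sum_{i=0}^{d-1}p_{m,i}^2$. The whole argument rests on bounding each individual $H(A_m)$ from below by a single \emph{affine} function of $C_m$ and then summing, so that the collision constraint feeds in linearly.

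First I would invoke Larsen's result \cite{Larsen90}. Because $d$ is a prime power, the bases $A_1,\dots,A_M$ extend to a complete family of $d+1$ mutually unbiased bases, and Larsen's identity expresses the total index of coincidence of such a complete family through the purity, giving $\sum_{m=1}^{d+1}C_m=\mathrm{Tr}(\rho^2)+1$. For a pure state $\mathrm{Tr}(\rho^2)=1$, and each discarded term satisfies $C_m\ge 1/d$ (the minimum of $\sum_i p_i^2$ over the simplex, attained by the uniform distribution). Dropping the $d+1-M$ surplus bases therefore yields the key constraint $\sum_{m=1}^{M}C_m\le \frac{d+M-1}{d}=:S$. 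This is the only place the prime-power hypothesis is used, since it is exactly what guarantees the complete set the identity needs.

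The heart of the proof is the classical inequality relating Shannon entropy and collision probability. Consider the points $U_k=(1/k,\ln k)$ for $k=1,\dots,d$ in the $(C,H)$ plane; these are precisely the collision/entropy pairs of the uniform distributions on $k$ symbols, and they also lie on the R\'enyi curve $H=-\ln C$. Let $\ell$ denote the secant line through $U_{\kappa M}$ and $U_{\kappa M-1}$ (in the admissible range $M\le d+1$ one checks $\kappa=1$, so these are the adjacent uniform points $U_{M-1}$ and $U_{M}$). I would prove that \emph{every} distribution $p$ on $d$ symbols satisfies $H(p)\ge \ell(C(p))$. The route is: fixing $C(p)=c$ and minimizing $H$, the Lagrange conditions force the optimal $p$ to take at most two distinct nonzero values, so the minimal-entropy curve $g(c)=\min\{H(p):C(p)=c\}$ is carried by explicit two-level distributions; one then shows $g$ dominates the convex piecewise-linear interpolation $L$ of all the $U_k$, and since $L$ is convex and $\ell$ is one of its affine pieces, $\ell\le L\le g\le H$. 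This secant bound strictly improves the naive estimate $H\ge -\ln C$ used in weaker relations, because the chord of the convex curve $-\ln C$ lies above that curve between consecutive lattice points. \textbf{This step is the main obstacle}: establishing $g\ge L$ amounts to showing that on each interval $[1/(k+1),1/k]$ the minimal-entropy curve is concave and hence dominates its chord, which is the delicate calculus at the core of the result.

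Finally I would assemble the pieces. Writing $\ell(c)=\alpha+\beta c$, the slope $\beta=-\kappa M(\kappa M-1)\ln\frac{\kappa M}{\kappa M-1}$ is negative, so summing the secant bound over $m$ and using the collision constraint gives $\sum_{m=1}^{M}H(A_m)\ge\sum_{m=1}^{M}\ell(C_m)=M\alpha+\beta\sum_{m=1}^{M}C_m\ge M\alpha+\beta S=M\,\ell(S/M)$, where the step $\beta\sum_m C_m\ge \beta S$ is just the constraint multiplied by the negative number $\beta$; no Jensen inequality is required because $\ell$ is affine. Substituting $S/M=\frac{d+M-1}{dM}$ into $\ell(c)=\ln(\kappa M)-(\kappa M-1)\ln\frac{\kappa M}{\kappa M-1}\,(\kappa M\,c-1)$ reproduces exactly the claimed right-hand side. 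The integer $\kappa M$ merely records which adjacent pair of uniform points anchors the secant; choosing the pair that brackets $1/s$ with $s=S/M$ gives the strongest affine bound, and in the range $M\le d+1$ this pair is $(U_{M-1},U_M)$.
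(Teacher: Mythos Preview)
Your proposal is correct and follows exactly the decomposition the paper indicates: Larsen's identity (Lemma~\ref{lem:larsen}) plus the trivial bound $IC\ge 1/d$ on the discarded bases gives Corollary~\ref{corol:colbound}, and the classical secant bound you describe is precisely Lemma~\ref{lem:azerCol}, which the paper imports from~\cite{HT01}. The paper does not spell out the proof of the secant inequality $H(p)\ge \ell(C(p))$ any further than you do---it simply cites it---so your outline of the two-level Lagrange minimizer and the arc-wise concavity of $g$ is already more detailed than what appears here; the only cosmetic point is that your restriction to a pure $|\psi\rangle$ is harmless since mixed states give $\mathrm{Tr}(\rho^2)\le 1$ and hence a tighter collision constraint.
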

The above relation was proven to be tight for $d$ = 3, M = 4 in ~\cite{Ruiz94} but it is not tight in general. It is based on the following interesting result By Larsen~\cite{Larsen90}\\
\begin{lemma} \label{lem:larsen}
Given any quantum state $\rho \in \mathcal P(\hil d)$, where $d$ is a prime power, let	$A_m$ be the m'th mutually unbiased basis. Then
\begin{equation*}
\displaystyle\sum_{m=1}^{d+1}  IC(A_m) = Tr(\rho^2)+1
\end{equation*}
\end{lemma}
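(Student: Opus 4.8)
The plan is to work in the real inner-product space of Hermitian operators on $\hil{d}$ with $\langle X,Y\rangle = Tr(XY)$ and to reduce the statement to a single operator identity. Writing $P_i^{(m)} = |a_i^{(m)}\rangle\langle a_i^{(m)}|$ for the $i$-th rank-one projector of the $m$-th mutually unbiased basis, the outcome probabilities of measuring $\rho$ in $A_m$ are $p_i^{(m)} = Tr(\rho P_i^{(m)})$, so that $IC(A_m) = \sum_{i=0}^{d-1} Tr(\rho P_i^{(m)})^2$ and the left-hand side of the lemma equals $\sum_{m=1}^{d+1}\sum_{i=0}^{d-1} Tr(\rho P_i^{(m)})^2$. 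I would prove the stronger linear-algebraic identity
\[
\sum_{m=1}^{d+1}\sum_{i=0}^{d-1} Tr(X P_i^{(m)})\,Tr(Y P_i^{(m)}) = Tr(XY) + Tr(X)\,Tr(Y)
\]
for all Hermitian $X,Y$; specializing to $X = Y = \rho$ and using $Tr(\rho)=1$ then gives the claim.

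First I would record the overlaps forced by the MUB condition, namely $Tr(P_i^{(m)}P_j^{(m)}) = \delta_{ij}$ within a single basis and $Tr(P_i^{(m)}P_j^{(m')}) = |\langle a_i^{(m)} | a_j^{(m')}\rangle|^2 = 1/d$ for $m\neq m'$. Passing to the traceless parts $\tilde P_i^{(m)} = P_i^{(m)} - \frac{1}{d}I$, a one-line computation yields $Tr(\tilde P_i^{(m)}\tilde P_j^{(m')}) = Tr(P_i^{(m)}P_j^{(m')}) - 1/d$, which vanishes for every pair with $m\neq m'$. Hence the subspaces $V_m = \mathrm{span}\{\tilde P_i^{(m)}\}_i$ are mutually orthogonal, and since $\sum_i \tilde P_i^{(m)} = 0$ each $V_m$ has dimension at most $d-1$.

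The crucial structural step, and the one place the prime-power hypothesis enters through the existence of a complete set of $d+1$ MUBs guaranteed by the tightness of (\ref{lowerbound}), is a dimension count: the $d+1$ mutually orthogonal subspaces $V_m$ carry total dimension $(d+1)(d-1) = d^2-1$, exactly the dimension of the traceless Hermitian operators, so they form an orthogonal direct-sum decomposition of that space (and in particular $\dim V_m = d-1$). Within a fixed basis I would then check, using the Gram matrix $Tr(\tilde P_i^{(m)}\tilde P_j^{(m)}) = \delta_{ij} - 1/d$ together with $\sum_i \tilde P_i^{(m)} = 0$, that $\{\tilde P_i^{(m)}\}_i$ is a tight frame of constant $1$ for $V_m$, i.e. $\sum_i \tilde P_i^{(m)}\,Tr(\tilde P_i^{(m)}Z) = Z$ for all $Z \in V_m$ and $=0$ for $Z \perp V_m$. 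Summing over $m$ and invoking the decomposition shows that the map $\tilde\Phi(Z) = \sum_{m,i}\tilde P_i^{(m)}\,Tr(\tilde P_i^{(m)}Z)$ is precisely the orthogonal projection onto the traceless operators, $\tilde\Phi(Z) = Z - \frac{1}{d}Tr(Z)\,I$.

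Finally I would translate this back to the genuine projectors by substituting $\tilde P_i^{(m)} = P_i^{(m)} - \frac{1}{d}I$ into $\tilde\Phi$ and using $\sum_i P_i^{(m)} = I$ for each $m$ (hence $\sum_{m,i}P_i^{(m)} = (d+1)I$). The three correction terms each contribute a multiple of $Tr(Z)\,I$ that combine so that $\Phi(Z) = \sum_{m,i}P_i^{(m)}\,Tr(P_i^{(m)}Z) = Z + Tr(Z)\,I$; pairing with $Y$ then gives the displayed identity. The main obstacle is the structural step of the third paragraph: making the per-basis traceless subspaces fit together into an \emph{exact} orthogonal decomposition of all traceless Hermitian operators. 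This is exactly what requires a complete set of $d+1$ MUBs, so the argument, and with it the equality in the lemma, genuinely relies on $d$ being a prime power and fails when fewer than $d+1$ mutually unbiased bases exist.
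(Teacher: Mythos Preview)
The paper does not supply its own proof of this lemma; it simply quotes the identity as a result of Larsen~\cite{Larsen90} and then proceeds to use it. So there is no ``paper's proof'' to compare against.

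Your argument is correct and is in fact the standard route to this identity. The steps all check out: the traceless projectors $\tilde P_i^{(m)}$ from distinct bases are Hilbert--Schmidt orthogonal by the MUB condition; the dimension count $(d+1)(d-1)=d^2-1$ forces the $V_m$ to tile the traceless Hermitian operators exactly; within each $V_m$ the Gram matrix $I-J/d$ is idempotent, which together with $\sum_i\tilde P_i^{(m)}=0$ gives the tight-frame identity; and the bookkeeping when reinserting $I/d$ yields $\Phi(Z)=Z+Tr(Z)\,I$ as you claim. One remark worth making explicit in a final write-up is that you are implicitly using $\dim V_m\ge d-1$ (not just $\le d-1$) to make the count exact; this follows because the Gram matrix $I-J/d$ has rank $d-1$, so the $d$ vectors $\tilde P_i^{(m)}$ genuinely span a $(d-1)$-dimensional space. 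Your closing comment is also on point: the argument uses only the existence of a \emph{complete} set of $d+1$ MUBs, and the prime-power hypothesis enters solely to guarantee that such a set exists.
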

And since $Tr(\rho^2) \leq 1$ (equality when $\rho$ is a pure state) we have
\begin{equation*}
\displaystyle\sum_{m=1}^{d+1}  IC(A_m) \leq 2 
\end{equation*}
Since we can lower bound the collision probability for the distribution of any random variable over $d$ outcomes by $\frac{1}{d}$ we get that 
\begin{corollary} \label{corol:colbound}
\begin{equation*}
\displaystyle\sum_{m=1}^{M} IC(A_m)\leq 2 - \frac{1}{d} \times (d - M + 1) = \frac{d+M-1}{d}
\end{equation*}
\end{corollary}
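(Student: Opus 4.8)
The plan is to read the corollary off from Larsen's identity (Lemma~\ref{lem:larsen}) by discarding the bases that do not appear in the sum and controlling them with the trivial lower bound on the collision probability. Since $d$ is a prime power, a complete set of $d+1$ mutually unbiased bases exists, and I would regard the given observables $A_1,\dots,A_M$ as the first $M$ members of such a complete family $A_1,\dots,A_{d+1}$. Applying Lemma~\ref{lem:larsen} to this complete family and using $Tr(\rho^2)\le 1$ then supplies the starting inequality $\sum_{m=1}^{d+1} IC(A_m)\le 2$, exactly as recorded just before the corollary.

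The second ingredient I would use is the elementary fact that any probability distribution $(p_0,\dots,p_{d-1})$ over $d$ outcomes has collision probability at least $1/d$. This follows from Cauchy--Schwarz (equivalently, from convexity of $x\mapsto x^2$): one has $\sum_i p_i^2 \ge \frac{1}{d}\bigl(\sum_i p_i\bigr)^2 = \frac{1}{d}$, with equality precisely for the uniform distribution. Hence each of the $d+1-M$ discarded observables $A_{M+1},\dots,A_{d+1}$ contributes $IC(A_m)\ge \tfrac1d$.

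Combining the two, I would split the complete sum, isolate the first $M$ terms, and bound the tail term by term:
\begin{equation*}
\sum_{m=1}^{M} IC(A_m) = \sum_{m=1}^{d+1} IC(A_m) - \sum_{m=M+1}^{d+1} IC(A_m) \le 2 - (d+1-M)\cdot\frac1d = \frac{d+M-1}{d},
\end{equation*}
which is the claimed bound once one rewrites $d+1-M = d-M+1$ to match the intermediate expression $2 - \frac{1}{d}(d-M+1)$.

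There is no substantive obstacle in this argument; the only point requiring a word of care is the completion step, namely that the $M$ given bases sit inside a full set of $d+1$ MUBs so that Larsen's identity is applicable. In the intended setting the observables are drawn from a fixed complete set, which the prime-power hypothesis guarantees to exist, so the tail bases $A_{M+1},\dots,A_{d+1}$ are well defined and the argument goes through verbatim. Note that one never needs the discarded bases explicitly: only the \emph{count} $d+1-M$ of them matters, together with the fact that each carries collision probability at least $1/d$.
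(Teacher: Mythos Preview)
Your argument is correct and matches the paper's own derivation essentially verbatim: the paper obtains the corollary precisely by combining Larsen's identity (Lemma~\ref{lem:larsen}) with $Tr(\rho^2)\le 1$ and the trivial lower bound $IC(A_m)\ge \tfrac1d$ on the $d+1-M$ omitted bases. Your remark about needing the $M$ bases to sit inside a full set of $d+1$ MUBs is the implicit standing assumption in the paper as well.
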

This bound is generally not tight.\\

In this light Lemma \ref{lem:azer} can be viewed as a combination of two relations. The first is Corollary \ref{corol:colbound} while the second is the following result that is a classical relation between collision probability and the Shannon entropy~\cite{HT01}.\\
\begin{lemma} \label{lem:azerCol}
Let $\lbrace X_1, ..., X_M \rbrace$ be a set of M discrete random variables all over a finite set of $d$ values. Let $\lbrace P_{X_1}, ..., P_{X_M} \rbrace$ be the corresponding probability distributions where
\begin{equation*}
\displaystyle\sum_{i=1}^{M} IC(P_{X_i}) \leq k_{tot}
\end{equation*}
then
\begin{equation}
\displaystyle\sum_{i=1}^{M} H(P_i) \geq M \times \left(\ln(\kappa M) - (\kappa M-1)\times \left(\kappa M\times \frac{k_{tot}}{M}-1\right)\times ln\left(\frac{\kappa M}{\kappa M-1}\right)\right) \label{eq:lem:azerCol}
\end{equation}
where
\begin{equation*}
\kappa = \left\lceil \frac{1}{k_{tot}}\right\rceil
\end{equation*}
\end{lemma}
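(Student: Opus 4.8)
The plan is to reduce the $M$-distribution statement to a single-distribution bound that is \emph{linear} in the collision probability, after which the sum over the index collapses immediately. Writing $N=\kappa M$ and $\lambda=(N-1)\ln\frac{N}{N-1}$, the target inequality (\ref{eq:lem:azerCol}) is exactly
\[
\sum_{i=1}^{M}H(P_i)\;\geq\;M(\ln N+\lambda)-\lambda N\,k_{tot},
\]
so it suffices to prove the per-distribution estimate $H(P)\geq \ln N+\lambda-\lambda N\cdot IC(P)$ for every distribution $P$ on at most $d$ points, and then sum it using only $\sum_i IC(P_i)\leq k_{tot}$ together with the positivity of the slope $\lambda N$. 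No convexity in the outer sum is needed: linearity in $IC$ does all the work.

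First I would identify the right-hand side of the per-distribution bound as the secant line of the curve $k\mapsto H_{\min}(k)$, where $H_{\min}(k)$ is the least Shannon entropy of any distribution whose collision probability equals $k$. A direct check shows this affine function $\ell(k)=\ln N+\lambda-\lambda N k$ passes through the two points $(\frac1N,\ln N)$ and $(\frac{1}{N-1},\ln(N-1))$: evaluating at $k=\frac1N$ gives $\ln N$, and at $k=\frac{1}{N-1}$ gives $\ln(N-1)$ after substituting $\lambda=(N-1)\ln\frac{N}{N-1}$. Both endpoints lie on $H_{\min}$ because the uniform distribution on $n$ points realises collision probability $1/n$ with entropy $\ln n$, and among all distributions with $IC=1/n$ this uniform one is entropy-minimal (any concentration raises $IC$ above $1/n$, and recovering $IC=1/n$ forces added spread and hence higher entropy).

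The hard part will be proving that $\ell$ is a \emph{global} lower bound, i.e.\ $H(P)\geq \ell(IC(P))$ for all $P$, equivalently that $\Psi(P):=H(P)+\lambda N\cdot IC(P)=\sum_j\psi(p_j)$ with $\psi(p)=-p\ln p+\lambda N p^2$ is at least $\ln N+\lambda$ on the probability simplex. The tempting route, bounding each summand by a tangent line $\psi(p)\geq(\ln N+\lambda)p$, fails: although it holds with equality at $p=\frac1N$ and $p=\frac{1}{N-1}$, the difference $\psi(p)-(\ln N+\lambda)p$ dips slightly negative for $p$ strictly between these contact values, so the normalisation constraint $\sum_j p_j=1$ must genuinely be used. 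Instead I would minimise $\Psi$ over the simplex directly: since $\psi''(p)=-1/p+2\lambda N$ changes sign once, the stationarity condition $\psi'(p_j)=\mu$ has at most two solutions, so every interior minimiser takes at most two distinct probability values. Reducing to this low-parameter family and checking the boundary (coordinates equal to $0$), one verifies that the minimum of $\Psi$ equals $\ln N+\lambda$, attained exactly at the uniform distributions on $N$ and on $N-1$ points. Geometrically this is the statement, borrowed from \cite{HT01}, that the lower boundary of the achievable $(IC,H)$ region is convex with the uniform distributions as its vertices: within $[\frac1N,\frac{1}{N-1}]$ the boundary is concave and hence above its chord $\ell$, while the convexity of the polygon of vertices pushes the boundary above the extended chord outside that interval.

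Finally, summing the per-distribution bound yields $\sum_i H(P_i)\geq M(\ln N+\lambda)-\lambda N\sum_i IC(P_i)\geq M(\ln N+\lambda)-\lambda N k_{tot}$, which is precisely (\ref{eq:lem:azerCol}) after unfolding $N=\kappa M$ and $\lambda$. I should record that the argument uses $N=\kappa M\leq d$ so that the uniform distribution on $N$ points is admissible; in the MUB regime $\kappa=1$ and $N=M$, and the edge case $N>d$ can be absorbed by noting that the secant still lies below $H_{\min}$ on the admissible range $IC\geq 1/d$. It is also worth remarking that $N=\kappa M$ is only one admissible choice of contact points: any integer $N\geq2$ gives a valid secant bound, and the looseness acknowledged in the text reflects that $k_{tot}/M$ need not fall in the interval $[\frac1N,\frac{1}{N-1}]$ on which this particular secant is tight.
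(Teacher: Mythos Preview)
The paper does not actually prove Lemma~\ref{lem:azerCol}: it is stated in the preliminaries as a known result, attributed to \cite{HT01} (and implicitly to \cite{Azarchs04,Ruiz95,WYM09}), and the paper's own contribution is the \emph{improvement} Lemma~\ref{EntMulti}. So there is no in-paper proof to compare against.

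That said, your proposal is correct and is essentially the argument behind the cited result. Your key observation---that the right-hand side of (\ref{eq:lem:azerCol}) is affine in $k_{tot}$, so it suffices to prove the per-distribution secant bound $H(P)\geq \ell(IC(P))$ with $\ell$ the chord of $\tilde H$ through $(1/N,\ln N)$ and $(1/(N-1),\ln(N-1))$---is exactly the mechanism of \cite{HT01}. The verification that $\ell\leq\tilde H$ everywhere is cleanest via the route you sketch at the end: inside $[1/N,1/(N-1)]$ use concavity of $\tilde H$ on that arc (stated in the paper just before Section~\ref{sec:minThreeEnt}); outside use $\tilde H(k)\geq -\ln k$ together with the fact that a chord of the convex function $-\ln k$ lies below it beyond its endpoints. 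Your Lagrange-multiplier reduction to two-valued distributions is a valid alternative, but the geometric argument is shorter and matches what the paper already records about $\tilde H$. Your remarks on the edge case $N>d$ and on $N=\kappa M$ being merely one admissible choice are accurate and worth keeping.
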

This bound is generally not tight.
For most values of collision probability for a distribution the Shannon entropy has a range of possible values. It is hence impossible to turn (\ref{eq:lem:azerCol}) into an equality for all values. It is, however, possible to give a tighter bound. This problem will be the main topic of Section \ref{SICrelation}.
\subsection{Higher order entropic uncertainty relations} \label{sec:highOrder}
While entropic uncertainty relations for the Shannon entropy are interesting from a purely theoretical viewpoint and sometimes useful, it is often necessary to use higher order entropy such as collision entropy ($\alpha = 2$) or min-entropy ($\alpha = \infty$) (eg. privacy amplification). Unfortunately a lower bound on the Shannon entropy does not directly imply a lower bound on $\alpha > 1$.\\
Using the convexity of $-\ln(k)$ and Corollary \ref{corol:colbound} a simple lower bound on the collision entropy can be constructed (see also~\cite{Ruiz95}). \\
\begin{lemma}
Let $A_1, ..., A_{M}$ be $M \leq d + 1$ mutually unbiased observables for a $d$ dimensional quantum system, where $d$ is a prime power. Then
\begin{equation*}
\displaystyle\sum_{m=1}^{M} H_2(A_m) \geq M \times ln\left(\frac{d+M-1}{d\times M}\right)
\end{equation*}
\end{lemma}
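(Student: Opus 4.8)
The plan is to follow the hint and combine the convexity of $k \mapsto -\ln(k)$ with the collision-probability budget of Corollary \ref{corol:colbound}. First I would rewrite the target quantity through the definition of the collision entropy, $H_2(A_m) = -\ln\bigl(IC(A_m)\bigr)$, so that $\sum_{m=1}^{M} H_2(A_m) = -\sum_{m=1}^{M}\ln\bigl(IC(A_m)\bigr)$. This isolates the only feature of the problem that matters for the bound: the individual collision probabilities enter solely through $-\ln$, and they are constrained only through their sum.

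The central step is Jensen's inequality. Since $-\ln$ is convex on $(0,\infty)$, it gives $\frac{1}{M}\sum_{m=1}^{M}\bigl(-\ln IC(A_m)\bigr) \geq -\ln\bigl(\frac{1}{M}\sum_{m=1}^{M} IC(A_m)\bigr)$, i.e.\ the averaged entropy is at least $-\ln$ evaluated at the averaged collision probability. This is exactly the device that trades the constraint on the \emph{sum} of collision probabilities for a bound on the \emph{sum} of entropies, so that no information about the individual $IC(A_m)$ is required.

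Finally I would feed in Corollary \ref{corol:colbound}, namely $\sum_{m=1}^{M} IC(A_m) \leq \frac{d+M-1}{d}$, whence the averaged collision probability is at most $\frac{d+M-1}{dM}$. As $-\ln$ is monotonically decreasing, substituting this upper bound for the argument can only lower the right-hand side; the inequalities therefore chain, and multiplying through by $M$ delivers the claimed lower bound in the form $M\ln\bigl(\frac{dM}{d+M-1}\bigr)$.

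I do not anticipate a real obstacle: the proof is a two-step concatenation of Jensen's inequality and a monotone substitution. The only point that needs genuine care is keeping the two inequality directions compatible --- convexity fixes the sense of the Jensen step, while the decreasing nature of $-\ln$ fixes the sense of the substitution from the corollary, and these must agree for the chain to close. A useful consistency check is the case $M=1$, where the budget reduces to $IC(A_1)\leq 1$ and the bound collapses to the trivial $H_2(A_1)\geq 0$, as it should.
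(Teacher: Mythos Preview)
Your proposal is correct and follows exactly the approach the paper sketches: apply Jensen's inequality via the convexity of $-\ln$ to the collision probabilities, then feed in the budget from Corollary~\ref{corol:colbound} using the monotonicity of $-\ln$. Your final form $M\ln\!\bigl(\tfrac{dM}{d+M-1}\bigr)$ is the correct one --- the fraction in the paper's displayed statement is evidently inverted.
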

A particularly interesting result~\cite{DFRSS06} relates the Shannon entropy to the min-entropy. \\
Assume you have a quantum state $\rho_E$ that is comprised of n individual $d$-dimensional quantum states, $\rho_1, ..., \rho_n$. Each state is encoded in some basis chosen randomly and independently from a known set of bases. This could be a string of $n$ qubits as in BB84-coding. Let $h$ be a lower bound on the average Shannon entropy on the probability distributions of each state, then the min-entropy for the probability distribution from measuring $\rho_E$ is lower bounded by $\approx nh$. For the full formal description see the original article. The important thing to note is that improved relations for the Shannon entropy on a $d$-dimensional quantum state can be used to improve min-entropy relations for a register of n such states. An example where this is applicable is~\cite{DFSS07}.
\subsection{Probability and Shannon Entropy Relations} \label{SICrelation}
Let X be a discrete random variable over a finite set, $\mathcal{X}$, of $d$ values. Let $P_X$ be the corresponding discrete probability distribution. Assume you know an upper bound, $P_X \leq k$, on the collision probability for the distribution and know a lower bound, $p_{min}$, on the probability for any element in $\mathcal{X}$. I.e. $\forall x_i \in \mathcal{X} : P_X(x_i) \geq p_{min}$. In this situation you might be interested in a lower bound on the Shannon entropy for $P_X$. While~\cite{HT01} has given a tight answer for the case where $p_{min} = 0$, to the best of our knowledge, there is no tight bound for the slightly more general case of $p_{min} > 0$. Section \ref{singleprob} will show a tight bound for the general case. Also, it is our opinion that the proof is simpler than the one presented in~\cite{HT01}.\\
Now consider instead a situation where you have a set of M discrete random variables $\lbrace X_1, ..., X_M \rbrace$ where $X_i$ is over a finite set of $d_i$ values. Let $\lbrace P_{X_1}, ..., P_{X_M} \rbrace$ be the corresponding probability distributions. Assume you know an upper bound, $\displaystyle\sum_{i=1}^{M} IC(P_{X_i}) \leq k_{tot}$, on the sum of collision probabilities for the distributions. Similarly to above, you might want a lower bound on the sum of Shannon entropies for the distributions. To the best of our knowledge, Lemma \ref{lem:azerCol} is the best known lower bound. An improved and proven tight bound is given in section \ref{sec:multiprob}.
\section{A single probability distribution}\label{singleprob}	
\begin{lemma} \label{minEntAll}
Let X be a discrete random variable over a finite set, $\mathcal{X}$, of $d$ values. Let $P_X$ be the corresponding discrete probability distribution where $\forall x_i \in \mathcal{X} : P_X(x_i) \geq p_{min}$ and $IC(P_X) \leq k$. Then
\begin{eqnarray*}
H(P_X) &\geq &H(P'_X)\\
p_{min} & \leq & P'_X(x_1) \leq P'_X(x_2) \leq ... \leq P'_X(x_d) \\
IC(P'_X) &\leq& k 
\end{eqnarray*}
where $P'_X$ is defined as
\begin{eqnarray} 
 P'_X(x_1) =  P'_X(x_2) = ... =  P'_X(x_{d - \mathfrak{K} - 1})= p_{min} \label{allp_min} \\
 P'_X(x_{d - \mathfrak{K}}) = 1-p_{min} \times (d - \mathfrak{K} - 1) - \frac{\mathfrak{K} \times (1+ p_{min} \times (\mathfrak{K}+1-d))+\Delta}{\mathfrak{K}+1} \label{p_low}\\
P'_X(x_{d - \mathfrak{K} + 1}) = ... = P'_X(x_d) = \frac {\mathfrak{K} \times (1+ p_{min} \times (\mathfrak{K}+1-d))+\Delta}{\mathfrak{K}^2+\mathfrak{K}} \label{p_high}
\end{eqnarray} 
\begin{eqnarray*} 
\Delta  = \sqrt{\mathfrak{K} \times \left(k+\mathfrak{K}\times k -1 + p_{min} \times (d \times p_{min}+2d+\mathfrak{K}\times d \times p_{min}-d^2\times p_{min}-2\mathfrak{K}-2) \right)}
\end{eqnarray*} 
\begin{eqnarray} 
d \geq \mathfrak{K} &=& \left\lfloor\frac{(1-d \times p_{min})^2}{d\times p_{min}^2-2\times p_{min} + k} \right\rfloor \geq 1 \label{eq:theK}
\end{eqnarray} 
\end{lemma}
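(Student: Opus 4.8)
The plan is to treat the statement as a constrained optimization problem: minimize the Shannon entropy $H(P_X)=-\sum_{i=1}^{d}p_i\ln p_i$ over the feasible set $C=\{(p_1,\dots,p_d):\sum_i p_i=1,\ \sum_i p_i^2\le k,\ p_i\ge p_{min}\ \forall i\}$, and to show that the minimizer is exactly the distribution $P'_X$ described in (\ref{allp_min})--(\ref{p_high}). The set $C$ is convex and compact and $H$ is concave (equivalently $\sum_i p_i\ln p_i$ is convex), so a minimizer exists and is attained at an extreme point of $C$. First I would record that any minimizer may be taken sorted, $p_1\le\cdots\le p_d$, since $H$ and both constraints are symmetric under permutations, and that $IC(P_X)=\sum_i p_i^2$ so the hypothesis $IC(P_X)\le k$ is exactly the ball constraint.

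The core is a structural reduction showing the minimizer has the claimed three-level shape: a block equal to the floor $p_{min}$, a single intermediate value, and a block of equal ``high'' values. I would establish this by local exchange moves. First, a concentration move: for two indices with $p_{min}<p_i\le p_j$, replacing $(p_i,p_j)$ by $(p_i-\varepsilon,p_j+\varepsilon)$ keeps $\sum_i p_i$ fixed, strictly increases $\sum_i p_i^2$, and changes $H$ at rate $\ln p_i-\ln p_j\le 0$; hence while the collision budget permits, one can push the smaller free value down to $p_{min}$ without increasing entropy, and the collision constraint must be tight, $\sum_i p_i^2=k$, at the optimum (else this move still has slack and strictly lowers $H$). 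Second, at most one free level other than the top survives: if three of the coordinates exceeding $p_{min}$ took distinct values, there is a one-parameter perturbation fixing both $\sum_i p_i$ and $\sum_i p_i^2$ to first order, along which $H$, being strictly concave, cannot be locally minimal, so one can move until a coordinate reaches $p_{min}$ or two coordinates coincide. Equivalently, the stationarity (Lagrange/KKT) condition $\ln p_i-2\mu p_i=\text{const}$ for the non-floor coordinates has at most two roots in $p_i$. The reason a \emph{single} intermediate value suffices is integrality: once the floor and high blocks are fixed, the high block contains an integer number $\mathfrak{K}$ of entries, and the residual mass and collision budget are then matched exactly by one adjustable intermediate entry.

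With the shape fixed---$d-\mathfrak{K}-1$ entries at $p_{min}$, one intermediate value $m$, and $\mathfrak{K}$ equal high values $h$---the two equalities become $\mathfrak{K}h+m+(d-\mathfrak{K}-1)p_{min}=1$ and $\mathfrak{K}h^2+m^2+(d-\mathfrak{K}-1)p_{min}^2=k$. Eliminating $m$ from the first and substituting into the second yields a quadratic in $h$ whose admissible root produces the radical $\Delta$ and hence the closed forms (\ref{p_low}) and (\ref{p_high}); this is a routine but careful computation, and one checks that the normalization equation is built into the stated expression for $P'_X(x_{d-\mathfrak{K}})$. It then remains to pin down the integer $\mathfrak{K}$: the count of high entries is forced by demanding that the intermediate value lie in its valid range $p_{min}\le m\le h$ (the endpoints being where a high entry merges into the floor or into the high block), and the largest $\mathfrak{K}$ consistent with $m\ge p_{min}$ is precisely $\lfloor (1-d\,p_{min})^2/(d\,p_{min}^2-2p_{min}+k)\rfloor$, giving (\ref{eq:theK}); note this reduces to the known $\lfloor 1/k\rfloor$ at $p_{min}=0$. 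Finally I would verify the feasibility assertions $p_{min}\le P'_X(x_1)\le\cdots\le P'_X(x_d)$ and $IC(P'_X)\le k$ directly from these formulas.

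I expect the main obstacle to be the determination of $\mathfrak{K}$ together with the single-intermediate-value claim: proving that exactly one intermediate entry (rather than several) is optimal, and that the stated floor expression is the optimal integer count, requires analyzing how the minimal entropy depends on the integer number of high entries and showing that dependence is minimized at the claimed value, intertwined with the quadratic solve for $\Delta$ and the verification that $m$ stays in $[p_{min},h]$. The exchange-based structural reduction is comparatively standard; the delicate, computation-heavy part is converting that structure into the exact closed-form optimizer and the integer $\mathfrak{K}$.
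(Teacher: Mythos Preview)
Your plan matches the paper's at the top level---set up the constrained minimization, establish the three-block shape of the minimizer, solve the resulting two equations, and identify the integer $\mathfrak{K}$. The paper, however, obtains the structural reduction through a dedicated three-point lemma (Lemma~\ref{lem:minThreeEnt}): for any three entries of the minimizer that all exceed $p_{min}$, the two largest must coincide. Applied to every triple containing the smallest free entry, this forces the free coordinates to consist of a \emph{single} intermediate value together with a block of equal high values. Your route replaces this with exchange moves and the first-order KKT condition.

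There is a real gap at exactly this step. The stationarity condition $\ln p_i-2\mu p_i=\text{const}$ on the non-floor coordinates does bound the number of distinct free \emph{values} by two, but it says nothing about their multiplicities: configurations with $n_a\ge 2$ copies of the smaller free value $a$ are also KKT points. For instance, with $p_{min}=0$, $d=5$, $k=0.3$, both $(0,\,m,\,h,\,h,\,h)$ with $m\approx 0.056$, $h\approx 0.315$ and $(0,\,a,\,a,\,b,\,b)$ with $a\approx 0.138$, $b\approx 0.362$ satisfy the sum, collision, and first-order KKT conditions; only the first is the true minimizer. Your ``integrality'' sentence explains why a single intermediate entry is \emph{consistent}, not why it is \emph{forced}, so the reduction to the shape (\ref{allp_min})--(\ref{p_high}) is not yet justified.

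The gap can be closed in two ways. The paper's way is computational: on the circle $\{p_i+p_j+p_k=\epsilon,\ p_i^2+p_j^2+p_k^2=k'\}$ it shows the three-term entropy is nondecreasing in the smallest coordinate, so at the minimum the smallest free coordinate is pushed down and the other two coincide. Staying closer to your approach, you can instead invoke the \emph{second-order} necessary condition: if the lower free value $a$ had multiplicity at least two, the direction $v=(+1,-1,0,\dots)$ supported on two $a$-coordinates is tangent to both active constraints, and the Lagrangian Hessian in that direction equals $2(-1/a+2\mu)<0$ because $a$ lies to the left of the peak of $p\mapsto\ln p-2\mu p$. That contradicts local minimality and forces $n_a\le 1$. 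Once this is in place, your feasibility argument pins down $\mathfrak{K}$ uniquely (the admissible ranges for the intermediate value $m\in[p_{min},h]$ for consecutive $\mathfrak{K}$ abut), and the quadratic solve yields (\ref{p_low})--(\ref{p_high}) as stated.
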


 \begin{proof}
This will be proven by explicitly constructing the probability distribution $P'_X$ and show it is the (real) solution to the following minimization problem
\begin{eqnarray*}
\text{minimize } H(P'_X) \text{ subject to} \label{min:singleProb} \\
IC(P'_X) \leq k \label{eq:SingleMinCons} \\
p'_{min} & \leq & P'_X(x_1) \leq P'_X(x_2) \leq ... \leq P'_X(x_d) \\
\end{eqnarray*}
It is straight-forward to see that we can assume (\ref{eq:SingleMinCons}) to reach equality for the solution. We also need the following Lemma, the proof of which can be found in Section \ref{sec:minThreeEnt}.
\begin{lemma} \label{lem:minThreeEnt}
Given three probabilities $P'_X(x_i) \leq P'_X(x_j) \leq P'_X(x_k)$ that are part of the solution to (\ref{min:singleProb}) and where
\begin{eqnarray*}
\epsilon &=& P'_X(x_i) + P'_X(x_j) + P'_X(x_k) \\
k' &=& P'_X(x_i)^2 + P'_X(x_j)^2 + P'_X(x_k)^2 
\end{eqnarray*}
If $P'_X(x_i) >  p_{min} $ then it must be that
\begin{equation*}
P'_X(x_j) =  P'_X(x_k) = \frac{\epsilon}{3} + \frac{\sqrt{6k'-2\epsilon^2}}{6}
\end{equation*}
\end{lemma}
\begin{proof}
For readability we will in the following for all $i$ write $P'_X(x_i) = p_i$.
\subsubsection*{Proof of (\ref{allp_min})}
This will be shown by contradiction.
Assume $p_{d - \mathfrak{K}_0 }$ be the first probability greater than $p_{min}$ and that $\mathfrak{K}_0 > \mathfrak{K}$. Then define
\begin{eqnarray*} 
\forall j,l : {d - \mathfrak{K}_0 }<j<l \leq d\\
\hat{k} = p_{d - \mathfrak{K}_0 }^2  + p_j^2 + p_l^2\\
\hat{\epsilon} = p_{d - \mathfrak{K}_0 }  + p_j + p_l
\end{eqnarray*} 
Since the entire entropy function is minimized, the entropy of these three probabilities are also minimized according to lemma 2, given $\hat{\epsilon}$ and $\hat{k}$. Which means we can assume that the entropy contributed by these three probabilities are decreasing in $\hat{k}$. We can therefore assume that constraint \ref{eq:minCons} is an equality. Since $p_{d - \mathfrak{K}_0 } > p_{min}$ then by (\ref{eqprobs}) in Lemma \ref{lem:minThreeEnt}, we see that $p_j = p_l$. This implies that $p_{d - \mathfrak{K}_0+1} = ... = p_d= \frac{1-p_{d - \mathfrak{K}_0 } - p_{min}\times(\mathfrak{K}_0 -1)}{D-\mathfrak{K}_0} $. The collision probability of the entire distribution can hence be given as a function of $\mathfrak{K}_0$
\begin{eqnarray*} 
k &=&p_{min}^2\times({d - \mathfrak{K}_0 - 1}) + p_{d - \mathfrak{K}_0}^2 + p_{d - \mathfrak{K}_0+1}^2 + ... + p_d^2 \\
&=& \left(\frac{1-p_{d - \mathfrak{K}_0} - p_{min}\times(d - \mathfrak{K}_0 -1)}{\mathfrak{K}_0}\right)^2\times (\mathfrak{K}_0) + p_{d - \mathfrak{K}_0}^2 + p_{min}^2\times({d - \mathfrak{K}_0} -1)
\end{eqnarray*} 
Solving this for ${\mathfrak{K}_0}$ gives
\begin{equation*}
\mathfrak{K}_0 = \frac{(p_{d - \mathfrak{K}_0} - p_{min} + dp_{min}-1)^2}{k-2p_{min} + d\times p_{min}^2 -p_{d - \mathfrak{K}_0}^2 - p_{min}^2+2p_{d - \mathfrak{K}_0} \times p_{min}}
\end{equation*}
And by assumption we have that 
\begin{eqnarray*}
\mathfrak{K}_0& >& \mathfrak{K} \\
\frac{(p_{d - \mathfrak{K}_0} - p_{min} + dp_{min}-1)^2}{k-2p_{min} + d \times p_{min}^2 -p_{d - \mathfrak{K}_0}^2 - p_{min}^2+2p_{d - \mathfrak{K}_0} \times p_{min}} &>& \left\lfloor\frac{(1-d \times p_{min})^2}{d\times p_{min}^2-2\times p_{min} + k} \right\rfloor \\
\frac{(p_{d - \mathfrak{K}_0} - p_{min} + dp_{min}-1)^2}{k-2p_{min} + d\times p_{min}^2 -p_{d - \mathfrak{K}_0}^2 - p_{min}^2+2p_{d - \mathfrak{K}_0} \times p_{min}}& >& \frac{(1-d \times p_{min})^2}{d\times p_{min}^2-2\times p_{min} + k}\\
\end{eqnarray*}
Where the last inequality follows from that $\mathfrak{K}_0$ must be integer.\\
Take the derivative of the left hand side with respect to $p_{d - \mathfrak{K}_0}$\\
\begin{eqnarray*}
&\frac{\partial}{\partial p_{d - \mathfrak{K}_0}} &\frac{(p_{d - \mathfrak{K}_0} - p_{min} + d p_{min}-1)^2}{k-2p_{min} + d \times p_{min}^2 -p_{d - \mathfrak{K}_0}^2 - p_{min}^2+2p_{d - \mathfrak{K}_0} \times p_{min}} \\ \\
&=& 2 \times \frac{(p_{d - \mathfrak{K}_0} - p_{min} + D\times p_{min} -1)(k - p_{d - \mathfrak{K}_0} - p_{min}+d \times p_{d - \mathfrak{K}_0} \times p_{min})}{\left(k-2p_{min} + D\times p_{min}^2 - p_{d - \mathfrak{K}_0}^2 -p_{min}^2 + 2p_{d - \mathfrak{K}_0} \times p_{min} \right)^2}
\end{eqnarray*}
Note that the denominator is always positive. Looking at the numerator, see that  $(p_{d - \mathfrak{K}_0} - p_{min} + D\times p_{min} -1) \leq 0$ except for $p_{d - \mathfrak{K}_0} + (D-1) \times p_{min} > 1$, which is impossible for a normalized distribution.\\
Also note that $(k - p_{d - \mathfrak{K}_0} - p_{min}+d\times p_{d - \mathfrak{K}_0} \times p_{min}) \geq 0$ except for $k < p_{d - \mathfrak{K}_0} + p_{min} - d\times p_{d - \mathfrak{K}_0} \times p_{min}$. However this would imply that
\begin{eqnarray*}
k < p_{d - \mathfrak{K}_0} + p_{min} - d \times p_{d - \mathfrak{K}_0} \times p_{min} \\
\left(\frac{1-p_{d - \mathfrak{K}_0} - p_{min}\times(d - \mathfrak{K}_0 -1)}{\mathfrak{K}_0}\right)^2\times (\mathfrak{K}_0) + p_{d - \mathfrak{K}_0}^2 + p_{min}^2\times({d - \mathfrak{K}_0} -1) \\
< p_{d - \mathfrak{K}_0} + p_{min} -d\times p_{d - \mathfrak{K}_0} \times p_{min} \\
(d\times p_{min}-1)(p_{d - \mathfrak{K}_0} - p_{min} +d\times p_{min} -1) < 0
\end{eqnarray*}
Which is impossible because
\begin{eqnarray*}
(d\times p_{min}-1) \leq 0\\
(p_{d - \mathfrak{K}_0} - p_{min} +d\times p_{min} -1) \leq 0
\end{eqnarray*}	
Hence $\frac{\partial}{\partial p_{d - \mathfrak{K}_0}} \leq 0$ which means that the function should each its maximum when $ p_{d - \mathfrak{K}_0}$ approaches $p_{min}$. Therefore
\begin{eqnarray*}
&\quad &\frac{(p_{min} - p_{min} +d\times p_{min}-1)^2}{k-2p_{min} +d\times p_{min}^2 -p_{min}^2 - p_{min}^2+2p_{min} \times p_{min}}\\ 
& =& \frac{(1-d \times p_{min})^2}{d\times p_{min}^2-2\times p_{min} + k}\\
&\geq& \mathfrak{K}_0\\
&=& \frac{(p_{d - \mathfrak{K}_0} - p_{min} +d\times p_{min}-1)^2}{k-2p_{min} +d p_{min}^2 -p_{d - \mathfrak{K}_0}^2+2p_{d - \mathfrak{K}_0} p_{min}} \\ 
&>& \mathfrak{K} \\
&=&\frac{(1-d \times p_{min})^2}{d\times p_{min}^2-2\times p_{min} + k}  \\
\end{eqnarray*}
Which is a contradiction and completes the proof.
\subsubsection*{Proof of (\ref{p_low}) + (\ref{p_high})}
Assume that $p_{d - \mathfrak{K}} = p_{min}$. Then
\begin{eqnarray}
k &\geq& (d - \mathfrak{K}) \times p_{min}^2 + \left(\frac{1-(d - \mathfrak{K}) \times p_{min}}{\mathfrak{K}}\right)^2 \times \mathfrak{K} \label{eq:midtPmin} \\
&= & (d - \mathfrak{K}) \times p_{min}^2 + \frac{(1-(d - \mathfrak{K}) \times p_{min})^2}{\mathfrak{K}} \nonumber
\end{eqnarray}
\\
Solving this for $\mathfrak{K}$
\begin{equation*}
\mathfrak{K} \geq \frac{(1-d \times p_{min})^2}{d\times p_{min}^2-2\times p_{min} + k} 
\end{equation*}
Comparing with Equation (\ref{eq:theK}) this must be an equality which means Equation (\ref{eq:midtPmin}) must also be an equality. This is only possible when

\begin{eqnarray*} 
p_{d - \mathfrak{K} + 1} = ... = p_d
\end{eqnarray*}
When $p_{d - \mathfrak{K}} > p_{min}$, we can use (\ref{eqprobs}) in Lemma \ref{lem:minThreeEnt} to show that 
\begin{eqnarray*} 
p_{d - \mathfrak{K} + 1} = ... = p_d
\end{eqnarray*}
Putting the two together means we can say that
\begin{eqnarray} 
1 & = &(d-\mathfrak{K}-1) \times p_{min} + p_{d - \mathfrak{K}} +  \mathfrak{K} \times p_d \nonumber \\ 
p_{d - \mathfrak{K}} &=& 1-(d-\mathfrak{K}-1) \times p_{min} - \mathfrak{K} \times p_d\label{eq:p_midt} \\ 
k & = & (d-\mathfrak{K}-1) \times p_{min}^2 + p_{d - \mathfrak{K}}^2 + \mathfrak{K} \times p_d^2 \nonumber \\
&=& (d-\mathfrak{K}-1) \times p_{min}^2 + (1-(d-\mathfrak{K}-1) \times p_{min} - \mathfrak{K} \times p_d)^2 + \mathfrak{K} \times p_d^2\nonumber 
\end{eqnarray}
Solving this for $p_{d}$ and using that $p_{d} \geq p_{d - \mathfrak{K}}$ we get that
\begin{eqnarray*}
p_d &=& \frac {\mathfrak{K} \times (1+ p_{min} \times (\mathfrak{K}+1-d))+\Delta}{\mathfrak{K}^2+\mathfrak{K}} \\
\Delta & =& \sqrt{\mathfrak{K} \times \left(k+\mathfrak{K}\times k -1 + p_{min} \times (d \times p_{min}+2d+\mathfrak{K}\times d \times p_{min}-d^2\times p_{min}-2\mathfrak{K}-2) \right)}
\end{eqnarray*}
which together with equation \ref{eq:p_midt} completes the proof.
\end{proof} \qed

 \end{proof}

For later reference we define the function $\hat{H}(k,p_{min})$ which is the lower bound on the Shannon entropy given the collision probability, k, and the smallest probability, $p_{min}$.
\begin{eqnarray}
\hat{H}(k,p_{min}) &= &H(P'_X)   \nonumber \\
&=& -p_{min} \times \ln(p_{min}) \times (d - \mathfrak{K} - 1) \\&-& P'_X(x_{d - \mathfrak{K}}) \times \ln(P'_X(x_{d - \mathfrak{K}})) - P'_X(x_d) \times \ln(P'_X(x_d)) \times \mathfrak{K} \label{eq:H_hat}
\end{eqnarray}

In the special case of $p_{min} = 0$ the result reduces to a result found in~\cite{HT01}.
In this case $P'_X$ simplifies to
\begin{corollary} \label{corol:topsoe}
\begin{eqnarray} 
 P'_X(x_1) &=&  P'_X(x_2) = ... =  P'_X(x_{d - \mathfrak{K} - 1})= 0 \\
 P'_X(x_{d - \mathfrak{K}}) &= &1-\frac{\mathfrak{K}+\sqrt{\mathfrak{K}\times(k+\mathfrak{K}\times k)-1}}{\mathfrak{K}+1}\\
P'_X(x_{d - \mathfrak{K} + 1}) = ... = P'_X(x_d) &=& \frac{\mathfrak{K}+\sqrt{\mathfrak{K}\times(k+\mathfrak{K}\times k)-1}}{\mathfrak{K}^2+\mathfrak{K}} 
\end{eqnarray} 
\begin{eqnarray*} 
d \geq \mathfrak{K} &=& \left\lfloor\frac{1}{k} \right\rfloor \geq 1
\end{eqnarray*} 
\end{corollary}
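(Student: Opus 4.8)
The plan is to obtain Corollary~\ref{corol:topsoe} as a direct specialization of Lemma~\ref{minEntAll} to the boundary case $p_{min}=0$. Lemma~\ref{minEntAll} already does all the substantive work: it identifies the entropy-minimizing distribution $P'_X$ as a three-level ``staircase'' consisting of $d-\mathfrak{K}-1$ entries pinned at the floor $p_{min}$, a single intermediate entry $P'_X(x_{d-\mathfrak{K}})$, and $\mathfrak{K}$ equal top entries $P'_X(x_{d-\mathfrak{K}+1})=\dots=P'_X(x_d)$, together with the closed forms~(\ref{allp_min})--(\ref{p_high}) and the definition~(\ref{eq:theK}) of $\mathfrak{K}$. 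Since that optimality argument is independent of the particular value of $p_{min}$, nothing new needs to be minimized; the corollary is purely a matter of inserting $p_{min}=0$ into each formula and collecting terms.

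Concretely, I would carry out the substitution in four steps. First, setting $p_{min}=0$ in~(\ref{eq:theK}) collapses the denominator $d\,p_{min}^2-2p_{min}+k$ to $k$ and the numerator $(1-d\,p_{min})^2$ to $1$, giving $\mathfrak{K}=\lfloor 1/k\rfloor$, with the bound $d\ge\mathfrak{K}\ge 1$ inherited unchanged. Second, in the expression for $\Delta$ the entire factor multiplied by $p_{min}$ vanishes, leaving $\Delta=\sqrt{\mathfrak{K}\,(k+\mathfrak{K}k-1)}$. Third, every one of the $d-\mathfrak{K}-1$ floor entries in~(\ref{allp_min}) becomes $0$. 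Fourth, in~(\ref{p_low}) and~(\ref{p_high}) the terms $-p_{min}(d-\mathfrak{K}-1)$ and $p_{min}(\mathfrak{K}+1-d)$ drop out, so the numerator $\mathfrak{K}(1+p_{min}(\mathfrak{K}+1-d))+\Delta$ simplifies to $\mathfrak{K}+\Delta$; this yields $P'_X(x_{d-\mathfrak{K}})=1-\frac{\mathfrak{K}+\Delta}{\mathfrak{K}+1}$ and $P'_X(x_{d-\mathfrak{K}+1})=\dots=P'_X(x_d)=\frac{\mathfrak{K}+\Delta}{\mathfrak{K}^2+\mathfrak{K}}$.

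This is essentially all that is required. The only point deserving genuine care---and the closest thing to an obstacle---is bookkeeping on the radicand of $\Delta$: the simplification gives $\mathfrak{K}(k+\mathfrak{K}k)-\mathfrak{K}$, which agrees with the form $\mathfrak{K}(k+\mathfrak{K}k)-1$ only when $\mathfrak{K}=1$, so I would track this factor explicitly rather than carry it over verbatim. As a consistency test I would verify that the resulting $P'_X$ is normalized and meets the collision constraint with equality, and that all entries stay nonnegative for $1\le\mathfrak{K}\le d$; the three-level normalization $(d-\mathfrak{K}-1)\cdot 0 + P'_X(x_{d-\mathfrak{K}}) + \mathfrak{K}\,P'_X(x_d)=1$ and the identity $IC(P'_X)=k$ both fall out immediately from the substituted values. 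Finally I would note that these specialized formulas coincide with the $p_{min}=0$ bound of~\cite{HT01}, as claimed, which closes the corollary.
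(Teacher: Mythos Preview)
Your approach is exactly the paper's: the corollary is presented there without a separate proof, simply as the specialization of Lemma~\ref{minEntAll} obtained by setting $p_{min}=0$ in~(\ref{allp_min})--(\ref{eq:theK}), and your four-step substitution reproduces this verbatim. Your flagged discrepancy in the radicand is a genuine observation: substituting $p_{min}=0$ into the lemma's $\Delta$ gives $\sqrt{\mathfrak{K}(k+\mathfrak{K}k-1)}=\sqrt{\mathfrak{K}(k+\mathfrak{K}k)-\mathfrak{K}}$, and a direct computation from the constraints $q+\mathfrak{K}p=1$, $q^2+\mathfrak{K}p^2=k$ confirms this is the correct form, so the ``$-1$'' printed in the corollary is a typo for ``$-\mathfrak{K}$'' (equivalently, the closing parenthesis is misplaced).
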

For later reference we define the function $\tilde{H}(k)$ which is the lower bound on the Shannon entropy given the collision probability, $k$.
\begin{eqnarray}
\tilde{H}(k) &=& H(P'_X)   \nonumber \\
&=& - P'_X(x_{d - \mathfrak{K}}) \times \ln(P'_X(x_{d - \mathfrak{K}})) - P'_X(x_d) \times \ln(P'_X(x_d)) \times \mathfrak{K} \label{eq:H_tilde}
\end{eqnarray}

\begin{figure}[htp]\label{fig:Htilde}
\centering
 \includegraphics[width=80mm]{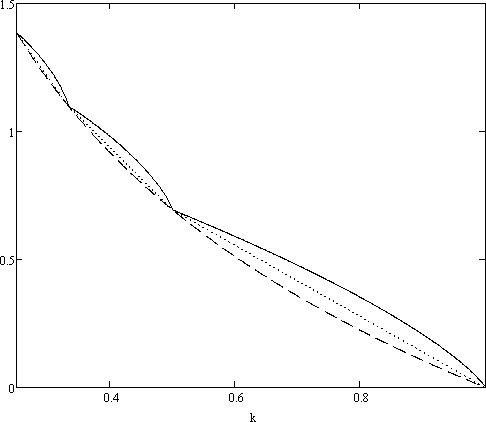}
\caption{\small{Full line: $\tilde{H}(k)$. Dotted line: (\ref{eq:lem:azerCol}) with $M=1$. Dashed line: $-\ln(k)$. For $\frac{1}{4} \leq k \leq 1$.}}
\end{figure}
The shape of $\tilde{H}(k)$ consists of a set of singularities at each point where $$\mathfrak{K} = \left\lfloor \frac{1}{k}\right\rfloor = \frac{1}{k}.$$ This is the points where $\mathfrak{K}$ changes value and all three functions are equal in these points, which is when the distribution is uniform. The distance between the points increase as $k$ approaches $1$. In the other end, as $k \rightarrow 0$ all three functions goes to infinity.\\
Sometimes it is useful to look at $\tilde{H}(k)$ while keeping $\mathfrak{K}$ constant. In this case we consider the \emph{arc} between two singularities. Restricted to these areas the function is smooth and hence differentiable.  It is shown in~\cite{HT01} that $\tilde{H}(k)$ on these arcs is concave which will be important. 
\subsection{Proof of Lemma \ref{lem:minThreeEnt}} \label{sec:minThreeEnt}
First we will show slightly different result and then rewrite Lemma \ref{lem:minThreeEnt}. 
Given three probabilities $p_1 \leq p_2\leq p_3$ where
\begin{eqnarray} 
\epsilon &=& p_1 + p_2 + p_3 \label{eps}\\
k &=& p_1^2 + p_2^2 + p_3^2 \label{k_3}
\end{eqnarray}
we can express $p_2$ and $p_3$ using $p_1$, $\epsilon$ and k
\begin{eqnarray} 
\epsilon &=& p_1 + p_2 + p_3 \label{eps}\\
k &=& p_1^2 + p_2^2 + (\epsilon - p_1 - p_2)^2 \label{k_3}\\
p_2(p_1,\epsilon,k) &=& \frac{\epsilon}{2} - \frac{p_1}{2} - \delta \nonumber \\
p_3(p_1,\epsilon,k) &=& \frac{\epsilon}{2} - \frac{p_1}{2} + \delta\nonumber \\
\delta &= &\frac{\sqrt{2\epsilon\times p_1 - \epsilon^2 -3p_1^2+2k}}{2} \nonumber 
\end{eqnarray}
Since $\delta$ has to be real we have that $\delta \geq 0$ and because  $p_1 \leq p_2$ we get
\begin{eqnarray}
p_1 &\leq& \frac{\epsilon}{2} - \frac{p_1}{2} - \delta \nonumber  \\
0 \leq \delta &\leq& \frac{\epsilon - 3p_1}{2} \label{deltaLimit}
\end{eqnarray}
Let $H(p) = -p\times ln(p)$ be the Shannon entropy function, we can then express the sum of entropies for the three probabilities as a function of $p_1$, $\epsilon$ and k.
 \begin{eqnarray*} 
H^3(p_1,\epsilon,k) &=& H(p_1) + H(p_2(p_1,\epsilon,k)) + H(p_3(p_1,\epsilon,k)) 
\end{eqnarray*}
\begin{lemma}\label{p1derivative}
Let $p_1$, $\epsilon$ and k be defined as above, then 
 \begin{eqnarray*} 
\frac{\partial H^3(p_1,\epsilon,k)}{\partial p_1} \geq 0
 \end{eqnarray*} 
\end{lemma}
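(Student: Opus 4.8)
The plan is to differentiate $H^3(p_1,\epsilon,k)=H(p_1)+H(p_2)+H(p_3)$ directly with respect to $p_1$, treating $\epsilon$ and $k$ as constants and using the explicit expressions for $p_2(p_1,\epsilon,k)$, $p_3(p_1,\epsilon,k)$ and $\delta$ stated above. Writing $A=\frac{\epsilon-p_1}{2}=\frac{p_2+p_3}{2}$ and recalling $H'(p)=-\ln p-1$, the chain rule gives $p_2'=-\tfrac12-\delta'$ and $p_3'=-\tfrac12+\delta'$, while differentiating $4\delta^2=2\epsilon p_1-\epsilon^2-3p_1^2+2k$ yields $\delta'=\frac{\epsilon-3p_1}{4\delta}=\frac{A-p_1}{2\delta}\ge 0$ (nonnegative by the ordering constraint (\ref{deltaLimit})). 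After the constants $-1$ cancel, the derivative collapses to the compact form
$$\frac{\partial H^3}{\partial p_1}=\frac12\ln\frac{p_2 p_3}{p_1^2}+\delta'\ln\frac{p_2}{p_3},$$
a sum of a manifestly nonnegative first term (since $p_1\le p_2\le p_3$) and a nonpositive second term; the entire content of the lemma is that the first dominates.

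Next I would normalize. Setting $x=p_1/A$ and $t=\delta/A$, the factor $A$ cancels completely and the derivative becomes a function of two variables only,
$$g(x,t)=\frac12\ln(1-t^2)-\ln x-c(t)+c(t)\,x,\qquad c(t)=\frac{1}{2t}\ln\frac{1+t}{1-t}>0,$$
on the region $0<x\le 1-t$, $0\le t<1$, where the upper bound $x\le 1-t$ is precisely (\ref{deltaLimit}) ($\delta\le A-p_1$). It suffices to prove $g\ge 0$ on this whole region; this is stronger than, and independent of, the particular curve traced as $p_1$ varies with $\epsilon,k$ fixed. For fixed $t$, $g$ is convex in $x$ (since $\partial_x^2 g=1/x^2>0$) with unconstrained minimizer $x^{*}=1/c(t)$.

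The crux is to show the minimizer lies at or beyond the right endpoint, i.e.\ $x^{*}=1/c(t)\ge 1-t$, equivalently $c(t)(1-t)\le 1$, i.e.\ $\ln\frac{1+t}{1-t}\le\frac{2t}{1-t}$. I would settle this single-variable inequality by setting $f(t)=\frac{2t}{1-t}-\ln\frac{1+t}{1-t}$, noting $f(0)=0$, and computing $f'(t)=\frac{4t}{(1-t)^2(1+t)}\ge 0$, so $f\ge 0$ on $[0,1)$. Granting this, $g(\cdot,t)$ is decreasing on the admissible interval $(0,1-t]$, so its minimum there is attained at $x=1-t$; a direct evaluation (using $c(t)\,t=\tfrac12\ln\frac{1+t}{1-t}$) gives $g(1-t,t)=0$, which is exactly the degenerate case $p_1=p_2$. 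Hence $g(x,t)\ge 0$ throughout, establishing the claim.

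The main obstacle is not the differentiation but spotting the normalization $x=p_1/A$, $t=\delta/A$ that makes $A$ drop out and exposes the convexity in $x$; once that is in place the problem reduces to the elementary estimate $\ln\frac{1+t}{1-t}\le\frac{2t}{1-t}$, whose proof is routine. A secondary point to check is the limiting behaviour at the boundary of the region, namely $p_1\to 0$ (where $g\to+\infty$ and the bound is trivial) and $\delta\to 0$, i.e.\ $p_2=p_3$ (where $\delta'\ln\frac{p_2}{p_3}$ is an indeterminate form that must be resolved by a short expansion, giving $g\to \frac{p_1}{A}-1-\ln\frac{p_1}{A}\ge 0$); both limiting cases only make the inequality easier.
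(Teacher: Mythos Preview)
Your proof is correct and takes a genuinely different route from the paper's.

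Both arguments start from the same derivative, which in your notation reads
\[
\frac{\partial H^3}{\partial p_1}=g(x,t)=\tfrac12\ln(1-t^2)-\ln x-(1-x)\,c(t),\qquad x=\tfrac{p_1}{A},\ t=\tfrac{\delta}{A}.
\]
The paper then \emph{fixes $p_1$ and $\epsilon$} (hence fixes $x$) and shows, via Taylor expansions of $\ln(a\pm\delta)$, that the expression is non-increasing in $\delta$ (equivalently in $t$); it therefore suffices to evaluate at the right endpoint $t=1-x$, where the derivative vanishes. You instead \emph{fix $t$} and exploit convexity of $g$ in $x$: since $\partial_x^2 g=1/x^2>0$ and the unconstrained minimizer $x^\ast=1/c(t)$ lies to the right of the admissible interval by the elementary bound $\ln\frac{1+t}{1-t}\le\frac{2t}{1-t}$, the minimum on $(0,1-t]$ is again at the corner $x=1-t$, where $g=0$.

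So the two proofs slice the same two-dimensional region in orthogonal directions and land on the same boundary point $p_1=p_2$. What your approach buys is economy: the normalization by $A$ collapses three parameters to two, and the whole analytic content is the one-line estimate $f(t)=\frac{2t}{1-t}-\ln\frac{1+t}{1-t}\ge0$, proved by $f(0)=0$, $f'(t)=\frac{4t}{(1-t)^2(1+t)}\ge0$. You avoid the Taylor-series bookkeeping the paper uses to establish monotonicity in $\delta$, and you never invoke the ambient constraint $\epsilon\le 1$ that the paper's final evaluation appeals to; your inequality is intrinsic to the three numbers $p_1\le p_2\le p_3$. The paper's route, on the other hand, makes the monotonicity in $k$ (at fixed $p_1,\epsilon$) explicit, which is conceptually tied to how the lemma is used downstream.

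Your remarks on the limiting cases ($p_1\to 0$, $\delta\to 0$) are correct and complete the picture; at $\delta\to 0$ one has $c(t)\to 1$ and $g\to x-1-\ln x\ge 0$, as you note.
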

In other words, if $\epsilon$ and k are kept constant, the entropy function will be at its minimum when $p_1$ is at its minimum.
\begin{proof}
Taking the partial derivative of $H^3$ with respect to $p_1$ gives
\begin{eqnarray}
\frac{\partial H^3(p_1,\epsilon,k)}{\partial p_1} = &-& ln(p_1) \nonumber \\
&+& \left(\frac{6p_1 - 2\epsilon}{8} \frac{1}{\delta}\right) \times \left(ln\left(\delta - \frac{p_1}{2} + \frac{\epsilon}{2}\right) - ln\left(-\delta - \frac{p_1}{2} + \frac{\epsilon}{2}\right)\right)\nonumber  \\
&+&\frac{\epsilon}{2} \times \left(ln\left(\delta - \frac{p_1}{2} + \frac{\epsilon}{2}\right) + ln\left(- \delta - \frac{p_1}{2} + \frac{\epsilon}{2}\right)\right)  \label{partedHj}
\end{eqnarray}
For any particular value of $p_1$, $\delta$ can have any value between 0 and $\frac{\epsilon - 3p_1}{2}$. Below we'll show that  (\ref{partedHj}) reaches its minimum value for a constant $p_1$ when $\delta = \frac{\epsilon - 3p_1}{2}$. This will be done by dividing the function into three parts and show that each part is non-increasing in $\delta$ (for a constant $p_1$ and $\epsilon$).\\ \\
\textit{Part 1 : $-ln(p_1)$}\\
This function is trivially non-increasing in $\delta$ \\ \\
\textit{Part 2 : $\left(\frac{6p_1 - 2\epsilon}{8} \frac{1}{\delta}\right) \times \left(ln\left(\delta - \frac{p_1}{2} + \frac{\epsilon}{2}\right) - ln\left(-\delta - \frac{p_1}{2} + \frac{\epsilon}{2}\right)\right)$}\\
Define $a = \frac{\epsilon}{2} - \frac{p_1}{2}$ and notice that $\frac{6p_1 - 2\epsilon}{8}$ is a negative constant factor.
\begin{eqnarray*}
-\left(\frac{1}{\delta}\right) \times \left(ln\left(a+\delta \right) - ln\left(a-\delta\right)\right)
\end{eqnarray*}
This can be rewritten using a Taylor series,
\begin{eqnarray*}
-\left(\frac{1}{\delta}\right) \times \left(ln\left(a+\delta \right) - ln\left(a-\delta\right)\right) &=& -\displaystyle\sum_{n=0}^{\infty}{\left(\frac{2\delta^{2n}}{(2n+1)a^{2n+1}}\right)} 
\end{eqnarray*}
Because $a > 0$ the above function is non-increasing in $\delta$. \\ \\
\textit{Part 3 : $\frac{\epsilon}{2} \times \left(ln\left(\delta - \frac{p_1}{2} + \frac{\epsilon}{2}\right) + ln\left(- \delta - \frac{p_1}{2} + \frac{\epsilon}{2}\right)\right)$} \\
Similar to before, define $a = \frac{\epsilon}{2} - \frac{p_1}{2}$ and notice that $\frac{\epsilon}{2}$ is a positive constant factor.
\begin{eqnarray*}
ln\left(a+\delta \right) + ln\left(a-\delta\right)
\end{eqnarray*}
This can be rewritten using a Taylor series,
\begin{eqnarray*}
ln(a+\delta) + ln(a-\delta) = 2ln(a)-\displaystyle\sum_{n=1}^{\infty}{\left(\frac{2\delta^{2n}}{2n\times a^{2n}}\right)}
\end{eqnarray*}
Because $a > 0$ the above function is non-increasing in $\delta$. \\ \\

This can be used to finish the proof of the lemma
Because $\delta \leq \epsilon -3p_1$
\begin{eqnarray*}
\frac{\partial H^3(p_1,\epsilon,k)}{\partial p_1} = &-& ln(p_1) \nonumber \\
&+& \left(\frac{6p_1 - 2\epsilon}{8} \frac{1}{\delta}\right) \times \left(ln\left(\delta - \frac{p_1}{2} + \frac{\epsilon}{2}\right) - ln\left(-\delta - \frac{p_1}{2} + \frac{\epsilon}{2}\right)\right)\nonumber  \\
&+&\frac{\epsilon}{2} \times \left(ln\left(\delta - \frac{p_1}{2} + \frac{\epsilon}{2}\right) + ln\left(- \delta - \frac{p_1}{2} + \frac{\epsilon}{2}\right)\right) \\\
&\geq&  -ln(p_1) \nonumber \\ 
&+& \left(\frac{6p_1 - 2\epsilon}{8} \frac{1}{\left(\frac{\epsilon - 3p_1}{2}\right)}\right) \times \left(ln\left(\left(\frac{\epsilon - 3p_1}{2}\right)- \frac{p_1}{2} + \frac{\epsilon}{2}\right) - ln\left(-\left(\frac{\epsilon - 3p_1}{2}\right) - \frac{p_1}{2} + \frac{\epsilon}{2}\right)\right)\nonumber  \\
&+&\frac{\epsilon}{2} \times \left(ln\left(\left(\frac{\epsilon - 3p_1}{2}\right) - \frac{p_1}{2} + \frac{\epsilon}{2}\right) + ln\left(- \left(\frac{\epsilon - 3p_1}{2}\right) - \frac{p_1}{2} + \frac{\epsilon}{2}\right)\right) \\
&=&  \frac{(ln(\epsilon-2p_1)+ln(p_1))\times (\epsilon -1)}{2}
\end{eqnarray*}
We have that $(\epsilon -1) \leq 0$ and $ln(\epsilon-2p_1)+ln(p_1) < 0$
\begin{eqnarray*}
\frac{\partial H^3(p_1,\epsilon,k)}{\partial p_1}  \geq \frac{(ln(\epsilon-2p_1)+ln(p_1))\times (\epsilon -1)}{2} \geq 0
\end{eqnarray*}
which completes the proof of Lemma \ref{p1derivative}
\end{proof}
We can now restate Lemma \ref{lem:minThreeEnt}.\\
The solution to the minimzation problem
\begin{eqnarray*}
\text{minimize } H^3(p_1,\epsilon,k) \text{ subject to}  \\
k &=& p_1^2 + p_2^2 + p_3^2  \\
\epsilon &=& p_1 + p_2 + p_3 \\
\end{eqnarray*}
is 
\begin{equation}
p_2 =  p_3 = \frac{\epsilon}{3} + \frac{\sqrt{6k-2\epsilon^2}}{6} \label{eqprobs}
\end{equation}

\begin{proof}

By equation \ref{deltaLimit} we have that
\begin{eqnarray*}
0 &\leq& \delta = \frac{\sqrt{2\epsilon\times p_1 - \epsilon^2 -3p_1^2+2k}}{2} \\
\end{eqnarray*}
By solving for $p_1$ and using that $p_1 \leq p_2$ we get
\begin{eqnarray*}
p_1& \geq& \frac{\epsilon}{3}- \frac{\sqrt{6k-2\epsilon^2}}{3}
\end{eqnarray*}
Since by Lemma \ref{p1derivative} the entropy is non-decreasing in $p_1$ we can assume that the entropy is minimized when it is at its smallest value. Since $p_1 > p_{min}$ it must be that $p_1 = \frac{\epsilon}{3}- \frac{\sqrt{6k-2\epsilon^2}}{3}$ which in turn means that $\delta = 0$. Putting it together we have that
\begin{eqnarray*}
p_1& = &\frac{\epsilon}{3}- \frac{\sqrt{6k-2\epsilon^2}}{3} \\
p_2(p_1,\epsilon,k) &=& \frac{\epsilon}{2} - \frac{p_1}{2} - \delta  = \frac{\epsilon}{2} - \frac{p_1}{2} =  \frac{\epsilon}{3} + \frac{\sqrt{6k-2\epsilon^2}}{6}\\
p_3(p_1,\epsilon,k) &=& \frac{\epsilon}{2} - \frac{p_1}{2} + \delta = \frac{\epsilon}{2} - \frac{p_1}{2} = \frac{\epsilon}{3} + \frac{\sqrt{6k-2\epsilon^2}}{6}
\end{eqnarray*}
which completes the proof.
\end{proof} \qed

\section{Multiple probability distributions} \label{sec:multiprob}
We now consider the case of a multiple of distributions. 
\begin{lemma} \label{EntMulti}
Let $\lbrace X_1, ..., X_M \rbrace$ be a set of M discrete random variables each over a finite set of $d_i$ values. Let $\lbrace P_{X_1}, ..., P_{X_M} \rbrace$ be the corresponding probability distributions where $\displaystyle\sum_{i=1}^{M} IC(P_{X_i}) \leq k_{tot}$. Assuming that
\begin{equation}
\forall i = 1, ..., M : \frac{1}{d_i} \leq \frac{1}{\left\lceil\frac{M}{k_{tot}}\right\rceil} \label{eq:SizeConstraint}
\end{equation}
then it holds that
\begin{eqnarray*}
\displaystyle\sum_{i=1}^{M} H(P_{X_i})& \geq & \displaystyle\sum_{i=1}^{M} H(P'_{X_i})\\
\displaystyle\sum_{i=1}^{M} IC(P'_{X_i}) & \leq & k_{tot}\\
\end{eqnarray*}
where 
\begin{eqnarray}
\displaystyle\sum_{i=1}^{M} H(P'_{X_i}) &=& \Phi \tilde{H}(k_{min}) + (M-1-\Phi)\tilde{H}(k_{max}) \label{bigEnt} 
\\ &+& \tilde{H}(k_{tot} - \Phi k_{min} - (M-1-\Phi)k_{max})  \nonumber
\end{eqnarray}
and
\begin{eqnarray*}
\Phi =  \left \lfloor \frac{k_{tot}-M\times k_{max}}{k_{min} - k_{max}}\right \rfloor \\
k_{min} = \frac{1}{\left\lceil\frac{M}{k_{tot}}\right\rceil}\\
k_{max} = \frac{1}{\left\lfloor\frac{M}{k_{tot}}\right\rfloor} \\
\end{eqnarray*}
\end{lemma}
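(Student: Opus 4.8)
The plan is to collapse the $M$-distribution problem to a one-dimensional allocation of collision probabilities and then solve that allocation using the shape of the single-distribution bound $\tilde H$ from (\ref{eq:H_tilde}). First I would invoke Corollary \ref{corol:topsoe}: for each $i$, any distribution on $d_i$ outcomes with $IC(P_{X_i})=k_i$ obeys $H(P_{X_i})\ge\tilde H(k_i)$, with equality for the explicit minimizer of that corollary. Writing $k_i=IC(P_{X_i})$ and using that $\tilde H$ is non-increasing (so the budget may be saturated, $\sum_i k_i=k_{tot}$), it suffices to minimize $\sum_{i=1}^{M}\tilde H(k_i)$ subject to $\sum_i k_i=k_{tot}$ and $k_i\ge 1/d_i$. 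The role of (\ref{eq:SizeConstraint}) is precisely to guarantee that every value I will use, all of which will lie in $[k_{min},k_{max}]$, is realizable on $d_i$ outcomes: the Corollary \ref{corol:topsoe} minimizer for such a $k_i$ is supported on $\lfloor 1/k_i\rfloor+1\le\lceil M/k_{tot}\rceil=1/k_{min}$ points, and (\ref{eq:SizeConstraint}) asserts exactly $d_i\ge 1/k_{min}$.

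Next I would record the shape of $\tilde H$ from Section \ref{singleprob} and \cite{HT01}. On each arc $[1/(n+1),1/n]$ the quantity $\mathfrak K=\lfloor 1/k\rfloor$ is constant and $\tilde H$ is concave there; moreover at every singularity the minimizer is uniform, so $\tilde H(1/n)=\ln n=-\ln(1/n)$, i.e. $\tilde H$ touches the convex curve $-\ln k$ at the singularities and lies above it in between. Consequently the piecewise-linear interpolant of $\tilde H$ through the points $(1/n,\ln n)$ is convex, so $\tilde H$ acquires an upward (convex) kink at each singularity while staying concave on the interior of each arc. The relevant arc is the unique one that can carry the whole budget among $M$ slots: $M/(n+1)\le k_{tot}\le M/n$ holds exactly for $n=\lfloor M/k_{tot}\rfloor$, which is precisely the \emph{central arc} $[k_{min},k_{max}]$ with the $k_{min},k_{max}$ of the statement.

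The heart of the argument, and the step I expect to be the main obstacle, is to show that an optimal allocation keeps every $k_i$ inside the single central arc $[k_{min},k_{max}]$. I would argue by a pairwise exchange: if two indices $a,b$ lie in different arcs, hold $k_a+k_b$ fixed and slide; as long as each stays in its current arc, the restricted objective $\tilde H(k_a)+\tilde H(k_b)$ is a sum of two concave pieces, hence concave in $k_a$, so its minimum over the slide is at a boundary, i.e. where one of the two values reaches a singularity. Iterating reduces the number of occupied arcs. The delicate point is to rule out termination at singularities of outer arcs: here I would use that the average $k_{tot}/M$ lies in $[k_{min},k_{max}]$, together with the convex-kink property, to show that relocating any value lying outside $[k_{min},k_{max}]$ back toward the central arc, paired with a compensating move of a value on the opposite side, never increases $\sum_i\tilde H(k_i)$. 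Making this inward-relocation and termination rigorous, rather than the surrounding reductions, is where the real work lies.

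Finally, with all $k_i$ confined to $[k_{min},k_{max}]$ I would close by convex geometry. The objective $\sum_i\tilde H(k_i)$ is a sum of functions each concave on $[k_{min},k_{max}]$, hence concave on the polytope $\{k:k_i\in[k_{min},k_{max}],\ \sum_i k_i=k_{tot}\}$, so its minimum is attained at a vertex. A vertex has all but one coordinate at an endpoint $k_{min}$ or $k_{max}$, the last fixed by the sum; requiring that free coordinate to lie in $[k_{min},k_{max}]$ forces the number of $k_{min}$-coordinates to be $\Phi=\lfloor (k_{tot}-Mk_{max})/(k_{min}-k_{max})\rfloor$, and by permutation symmetry all such vertices share one objective value. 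This gives $\Phi$ values at $k_{min}$, $M-1-\Phi$ at $k_{max}$, and one at $k_{tot}-\Phi k_{min}-(M-1-\Phi)k_{max}$, which is exactly (\ref{bigEnt}); realizing each $k_i$ by the matching Corollary \ref{corol:topsoe} minimizer produces the distributions $P'_{X_i}$ and saturates $\sum_i IC(P'_{X_i})=k_{tot}$.
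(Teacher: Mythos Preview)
Your plan matches the paper's proof almost step for step: reduce to minimizing $\sum_i\tilde H(k_i)$ with $\sum_i k_i=k_{tot}$, use arc-wise concavity of $\tilde H$ to push all but one coordinate to a singularity, argue that every coordinate must land on the single ``central'' arc $[k_{min},k_{max}]$, and finish with the vertex/counting computation that produces $\Phi$. The first reduction and the final polytope-vertex argument are exactly what the paper does.

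The one substantive divergence is in the confinement-to-the-central-arc step, which you rightly flag as the crux. The paper does \emph{not} run it through the convexity of the chord interpolant; instead it proves two explicit pairwise comparison inequalities (Lemmas~\ref{lem:eps_1} and~\ref{lem:eps_2}): for $k_1$ on a strictly lower arc than $k_2$, sliding the pair toward each other by the relevant $\epsilon$ strictly decreases $\tilde H(k_1)+\tilde H(k_2)$. These are pointwise statements about how two arcs sit relative to one another when aligned at their upper, respectively lower, endpoints, and the paper spends essentially all of Section~\ref{sec:EntProof} establishing them by direct derivative computations in $\mathfrak K$ and $\epsilon$. Claims~\ref{cl:kmin} and~\ref{cl:kmax} then iterate these lemmas to force every $k_i$ into $[k_{min},k_{max}]$. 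Your convex-kink observation is correct and is morally a corollary of those lemmas, but it is only a chord-slope comparison: concavity on each arc plus convexity of the interpolant gives you that the left derivative at a singularity lies below the right derivative there, not that the derivative along an entire lower arc dominates the derivative at the corresponding point of a higher arc. It is this stronger, aligned-arc monotonicity that the exchange argument actually consumes, so the ``real work'' you anticipate is exactly the content of Lemmas~\ref{lem:eps_1} and~\ref{lem:eps_2}; the convex-kink shortcut, as stated, does not close the gap on its own.
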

\begin{proof}
This will be proven by explicitly constructing the M distributions, $\lbrace P_{X_1}, ..., P_{X_M} \rbrace$ and showing they are the (real) solution to the following minimization problem
\begin{eqnarray}
\text{minimize } \displaystyle\sum_{i=1}^{M} H(P'_{X_i}) \text{ subject to}  \nonumber \\
\displaystyle\sum_{i=1}^{M} IC(P'_{X_i}) \leq k_{tot}\label{eq:minCons}
\end{eqnarray}
Let $k_i = IC(P'_{X_i})$. Because $\tilde{H}(k_i)$ is the smallest Shannon entropy given $k_i$ we can assume that $H(P'_{X_i}) = \tilde{H}(k_i)$ And because $\tilde{H}(k_i)$ is decreasing in $k_i$ it must be that (\ref{eq:minCons}) achieves equality. We can now restate the problem slightly.
\begin{eqnarray}
\text{minimize } \displaystyle\sum_{i=1}^{M} \tilde{H}(k_i) \text{ subject to} \label{eq:minproblem} \\
\displaystyle\sum_{i=1}^{M} k_i = k_{tot}\label{eq:equalminCons}
\end{eqnarray}
By concavity of $\tilde{H}(k)$ for constant $\mathfrak{K}$ and linearity of (\ref{eq:equalminCons}) we can, without loss of generality, assume that the smallest value of $\displaystyle\sum_{i=1}^{M} H(P'_{X_i})$ is found when\footnote{For more information on convex/concave optimization see~\cite{Boyd08}}
\begin{equation}
\forall i= 1 ...M-1:\quad \frac{1}{k_i} = \mathfrak{K}_i = \left\lfloor \frac{1}{k_i} \right\rfloor \label{eq:finitesearch}
\end{equation}
Here $k_M$ is not included because we are constrained by (\ref{eq:equalminCons}).\\
This gives a finite number of possible solutions but it is not a priori clear which exact values each $k_i$ should take. However, note that $\frac{k_{tot}}{M}$ is the average collision probability for each distribution and if we ignore the local concave structure of $\tilde{H}(k)$ the overall shape of it is actually convex as can be seen from figure \ref{fig:Htilde}. This is most easily seen by realizing that $-\ln(k)$ is convex. Since each $k_i$ is linearly dependent of the others you would, loosely speaking, expect the Shannon entropy to be minimized when all the values are in the same "\textit{area}" of the graph\footnote{If $\tilde{H}$ was completely convex then, by Jensen's inequality, the Shannon entropy would be minimized they're all in the same point. That is, when $k_1 = ... = k_M = \frac{k_{tot}}{M}$}. This is in fact true and to prove it we will need two Lemmas. The proofs are surprisingly involved and is postponed to Section \\
Assume we have two values $ 1 \geq k_2 > k_1 > 0$ such that $$\mathfrak{K}_1 = \lfloor \frac{1}{k_1} \rfloor > \mathfrak{K}_2 = \lfloor \frac{1}{k_2} \rfloor$$ and  $$\frac{1} {\mathfrak{K}_2 + 1} > k_1.$$ The intuition is that $k_1$ is to the left, and on a different arc than $k_2$ in figure \ref{fig:Htilde}.\\
Define \footnote{$\epsilon_1$ and $\epsilon_2$ are the distances to the next singularity when moving $k_1$ to the right and $k_2$ to the left, respectively}
\begin{equation*}
\epsilon_2 = k_2 - \frac{1}{\mathfrak{K}_2 + 1}
\end{equation*}
\[
\epsilon_1  = \left\{ 
\begin{array}{l l}
\frac{1}{\mathfrak{K}_1 - 1} - \frac{1}{\mathfrak{K}_1} & \quad \mbox{if $\frac{1}{k_1} = \mathfrak{K}_1$}\\
  \frac{1}{\mathfrak{K}_1} - k_1 & \quad \mbox{else}\\
\end{array} \right.
\]

\begin{lemma}\label{lem:eps_1}
If $\epsilon_1 \geq \epsilon_2$ then
\begin{equation*}
\tilde{H}(k_1) + \tilde{H}(k_2) > \tilde{H}(k_1 + \epsilon_2) + \tilde{H}(k_2 - \epsilon_2)  
\end{equation*}
\end{lemma}
\begin{lemma}\label{lem:eps_2}
If $\epsilon_1 \leq \epsilon_2$ then
\begin{equation*}
\tilde{H}(k_1) + \tilde{H}(k_2) > \tilde{H}(k_1 + \epsilon_1) + \tilde{H}(k_2 - \epsilon_1)  
\end{equation*}
\end{lemma}
Note that it follows from the definition that both $\epsilon_1$ and $\epsilon_2$ are strictly positive. To understand these results, take two different arcs on the graph in Figure \ref{fig:Htilde}. Lemma \ref{lem:eps_1} says that if you place the top endpoint of each arc on top of each other, the arc to the left ($k_2$) will always stay \emph{below} the arc to the right ($k_1$). Similarly, Lemma \ref{lem:eps_2} says the if you place the bottom endpoints of each arc on top of each other, the arc to the left ($k_2$) will always stay \emph{above} the arc to the right ($k_1$).\\
These two Lemmas will now be used to prove two claims. The first of which is\\
\begin{claim} \label{cl:kmin}
\begin{eqnarray*}
\forall i = 1 ...M:\quad k_i \geq \frac{1}{\left\lceil\frac{M}{k_{tot}}\right\rceil} = k_{min}  
\end{eqnarray*}
That is, none of the $k_i$'s in a solution can be on an arc that is below the arc the average value would be on. 
\end{claim}
\begin{proof}
This will be shown by contradiction. 
Assume there is some value $k_j < k_{min}$ that is part of a solution to (\ref{eq:minproblem}). Then, because $k_{min} \leq \frac{k_{tot}}{M}$, there must be some value $k_l >k_{min}$. \\
\\
Let $$\mathfrak{K}_j = \lfloor \frac{1}{k_j} \rfloor$$ and $$\mathfrak{K}_l = \lfloor \frac{1}{k_l} \rfloor.$$ 
We have that $$\mathfrak{K}_j \geq \frac{1}{k_{min}} > \mathfrak{K}_l$$ which means $$\frac{1}{\mathfrak{K}_l+1} \geq k_{min} > k_j.$$  We can now apply either Lemma \ref{lem:eps_1} or Lemma \ref{lem:eps_2}. In each case it is possible to construct two values $k'_j = k_j + \epsilon$ and $k'_l = k_l - \epsilon$ such that $$k_j + k_l = k'_j + k'_i$$ and $$\tilde{H}(k_j) + \tilde{H}(k_l) > \tilde{H}(k'_j) + \tilde{H}(k'_l)$$ which means $k_j$ cannot be part of a solution to $(\ref{eq:minproblem})$. This is a contradiction and completes the proof. \\
\end{proof} \qed
\begin{claim} \label{cl:kmax}
\begin{eqnarray*}
\forall i = 1 ...M:\quad k_i \leq k_{max} = \frac{1}{\left\lfloor\frac{M}{k_{tot}}\right\rfloor}  =  k_{max}
\end{eqnarray*}
That is, none of the $k_i$'s in a solution can be on an arc that is above the arc the average value would be on. 
\end{claim}
\begin{proof}
This will also be shown by contradiction and follows the same line as the proof for the first claim.
Assume there is some value $k_j > k_{max}$ that is part of a solution to (\ref{eq:minproblem}). Then, because $k_{max} \geq \frac{k_{tot}}{M}$, there must be some value $k_l <k_{max}$. \\
\\
Let $$\mathfrak{K}_j = \lfloor \frac{1}{k_j} \rfloor$$ and $$\mathfrak{K}_l = \lfloor \frac{1}{k_l} \rfloor.$$ We have that $$\mathfrak{K}_j < \frac{1}{k_{max}} \leq \mathfrak{K}_l$$ which means that $$\frac{1}{\mathfrak{K}_j+1} \geq k_{max} > k_j.$$  We can now apply either Lemma \ref{lem:eps_1} or Lemma \ref{lem:eps_2}. In each case it is possible to construct two values $$k'_j = k_j - \epsilon$$ and $$k'_l = k_l + \epsilon$$ such that $$k_j + k_l = k'_j + k'_i$$ and $$\tilde{H}(k_j) + \tilde{H}(k_l) > \tilde{H}(k'_j) + \tilde{H}(k'_l)$$ which means $k_j$ cannot be part of a solution to $(\ref{eq:minproblem})$. This is a contradiction and completes the proof. \\
\end{proof}\qed
Combining Claim \ref{cl:kmin} and \ref{cl:kmax} gives that \emph{all} the $k_i$'s in a solution must be on the same arc. That is, the arc where the average value, $\frac{k_{tot}}{M}$, would be. Also, we already have from (\ref{eq:finitesearch}) that all values, except one, should be in a singularity. Putting it together, we get that
\begin{eqnarray*}
\forall i &:& 1 ...M - 1:\quad k_i \in \lbrace k_{min}, k_{max} \rbrace \\
k_{min} &\leq& k_M  \leq k_{max}.
\end{eqnarray*}
Finally, to figure out which exact value they should take we define $\Phi$ and $M-1-\Phi$ as $|\lbrace i \in [0, ..., M-1] : k_i = k_{min} \rbrace|$ and  $|\lbrace i \in [0, ..., M-1] : k_i = k_{max} \rbrace|$ respectively. In other words, $\Phi$ is the number of probability distributions $P'_{X_i}$ that have collision probability $k_{min}$.  From this it follows that
\begin{eqnarray*}
 k_{min} &\leq& k_M  \leq k_{max}\\
  k_{min} &\leq& k_{tot} - \Phi k_{min} - (M-1-\Phi)k_{max} \leq k_{max}
 \end{eqnarray*}
To find the value for $\Phi$ such that the above constraint is satisfied we simply solve for it,
\begin{eqnarray*}
\frac{k_{tot}-M\times k_{max}}{k_{min} - k_{max}} - 1 \leq \Phi \leq \frac{k_{tot}-M\times k_{max}}{k_{min} - k_{max}}
\end{eqnarray*}
$\Phi$ has to be an integer which means
\begin{eqnarray*}
 \left \lceil \frac{k_{tot}-M\times k_{max}}{k_{min} - k_{max}}  \right \rceil - 1 \leq \Phi \leq  \left \lfloor \frac{k_{tot}-M\times k_{max}}{k_{min} - k_{max}}\right \rfloor
\end{eqnarray*}
The left and right-hand side side are equal, except when $\frac{k_{tot}-M\times k_{max}}{k_{min} - k_{max}}$ is integer. Below we will see that choosing either left or right actually results in the same solution with a slight change of labels.\\
Assume that $\frac{k_{tot}-M\times k_{max}}{k_{min} - k_{max}} = I$ where I is some integer, then
\begin{equation*}
k_{tot} = (M-I)k_{max} + I k_{min}
\end{equation*}
We also have that
\begin{equation*}
k_{tot} = \Phi k_{min} + (M-1-\Phi)k_{max} + k_M
\end{equation*}
Combining the two gives
\begin{equation*}
k_M = (1 + \Phi - I) k_{max} + (I - \Phi) k_{min}
\end{equation*}
 Choosing $\Phi = I -1$ (left inequality) will make $k_M = k_{min}$ and $\Phi = I$ (right inequality) will make $k_M = k_{max}$. Both choices therefore result in the same solution. The only difference is $k_M$ swapping labels with another $k_i$. \\
So after aesthetic considerations we choose
\begin{eqnarray*}
\Phi =  \left \lfloor \frac{k_{tot}-M\times k_{max}}{k_{min} - k_{max}}\right \rfloor
\end{eqnarray*}
This completes the proof of Lemma \ref{EntMulti}. 
\end{proof}\qed
Notice that the M probability distributions $\lbrace P'_{X_1}, ..., P'_{X_M} \rbrace$ are explicitly given a collision probability. Using Lemma \ref{corol:topsoe} is it hence possible to construct the distributions. That is, every outcome in every distribution is given a specific probability. It therefore follows that the bound is tight.\\
\subsection{Proofs of Lemma \ref{lem:eps_1} and \ref{lem:eps_2}}\label{sec:EntProof}

\label{sec:EntProof}
\textbf{Lemma \ref{lem:eps_1}}
If $\epsilon_1 \geq \epsilon_2$ then
\begin{equation*}
\tilde{H}(k_1) + \tilde{H}(k_2) > \tilde{H}(k_1 + \epsilon_2) + \tilde{H}(k_2 - \epsilon_2)  
\end{equation*}
\begin{proof}
Since $\tilde{H}(k_1)$ is decreasing and concave between $\mathfrak{K}_1 + 1 \leq \frac{1}{k_1}\leq \mathfrak{K}_1$
\begin{equation*}
\tilde{H}(k_1) - \tilde{H}(k_1 + \epsilon_2)\geq \tilde{H}(\frac{1}{\mathfrak{K}_1 + 1}) - \tilde{H}(\frac{1}{\mathfrak{K}_1 + 1} + \epsilon_2)
\end{equation*}
Using that $\epsilon_2 = k_2 - \frac{1}{\mathfrak{K}_2 + 1}$ we can now restate the problem in a slightly different way \begin{eqnarray}
\tilde{H}(\frac{1}{\mathfrak{K}_1 + 1}) - \tilde{H}(\frac{1}{\mathfrak{K}_1 + 1} + \epsilon_2)& >& \tilde{H}(\frac{1}{\mathfrak{K}_2 + 1}) - \tilde{H}(\frac{1}{\mathfrak{K}_2 + 1} + \epsilon_2) \label{eq:restate1} \\
- \int_0^{\epsilon_2} \left(\frac{\partial}{\partial \epsilon}  \tilde{H}\left(\frac{1}{\mathfrak{K}_1 + 1} + \epsilon \right) \right)d \epsilon &> &- \int_0^{\epsilon_2} \left(\frac{\partial}{\partial \epsilon}  \tilde{H}\left(\frac{1}{\mathfrak{K}_2 + 1} + \epsilon \right) \right) d \epsilon \nonumber 
\end{eqnarray}
First look at the derivative of $\tilde{H}(\frac{1}{\mathfrak{K}} + \epsilon)$ with respect to $\epsilon$ when $\left\lfloor \frac{1}{\mathfrak{K} + \epsilon} \right\rfloor = \mathfrak{K}$ is constant, which it is guaranteed to be by the definition
\begin{eqnarray*}
\frac{\partial}{\partial \epsilon} \tilde{H}(\frac{1}{\mathfrak{K}} + \epsilon) = \frac{ln\left(-\frac{\mathfrak{K} \Delta_\mathfrak{K} - 1}{ \Delta_\mathfrak{K} + 1}\right)}{2 \times \Delta_\mathfrak{K}}\\
\Delta_\mathfrak{K} = \sqrt{\frac{\epsilon \left(\mathfrak{K} + 1\right)}{\mathfrak{K}}} 
\end{eqnarray*}	
Therefore $\Delta_\mathfrak{K}$ is a strictly positive and decreasing function with $\mathfrak{K}$  for $\epsilon > 0$. That is $\Delta_\mathfrak{K} > 0$ and  $\frac{\partial}{\partial \mathfrak{K}} \Delta_\mathfrak{K} > 0$\\
Take the derivative of $\left(-\frac{\mathfrak{K} \Delta_\mathfrak{K} - 1}{ \Delta_\mathfrak{K} + 1}\right)$ with respect to $\mathfrak{K}$
\begin{equation}
\frac{\partial}{\partial \mathfrak{K}} \left(-\frac{\mathfrak{K} \Delta_\mathfrak{K} - 1}{ \Delta_\mathfrak{K} + 1}\right) =  -\frac{\epsilon(\mathfrak{K} +1)\left(2\mathfrak{K} + 2 \mathfrak{K} \Delta_\mathfrak{K} - 1\right)}{2\mathfrak{K} \left(2 \epsilon + \epsilon \Delta_\mathfrak{K} + \mathfrak{K} \Delta_\mathfrak{K} + 2 \epsilon \mathfrak{K} + \epsilon \mathfrak{K} \Delta_\mathfrak{K} \right)} \label{eq:partDevLog}
\end{equation}
Since $2 \mathfrak{K} > 1$ we see that \ref{eq:partDevLog} is strictly negative function for $\epsilon > 0$. By monotonicity of the logarithm this means $\frac{\partial}{\partial \mathfrak{K}} ln\left(-\frac{\mathfrak{K} \Delta_\mathfrak{K} - 1}{ \Delta_\mathfrak{K} + 1}\right) < 0$ for $\epsilon > 0$. \\  From~\cite{HT01} we know that $\frac{\partial}{\partial \epsilon} \tilde{H}(\frac{1}{\mathfrak{K}} + \epsilon) \leq 0$ which implies that $ln\left(-\frac{\mathfrak{K} \Delta_\mathfrak{K} - 1}{ \Delta_\mathfrak{K} + 1}\right) < 0$ for $\epsilon > 0$. \\
Putting this together gives that 
\begin{equation*}
\frac{\partial}{\partial \mathfrak{K}} \frac{\partial}{\partial \epsilon} \tilde{H}(\frac{1}{\mathfrak{K}} + \epsilon) > 0
\end{equation*}
for $\epsilon > 0$. This specifically means that\footnote{Both derivatives are negative, so the statement is basically that the slope is more sharply decreasing for $\mathfrak{K_1}$}
\begin{equation*}
 -\frac{\partial}{\partial \epsilon} \tilde{H}(\frac{1}{\mathfrak{K_1}+1} + \epsilon) > -\frac{\partial}{\partial \epsilon} \tilde{H}(\frac{1}{\mathfrak{K_2}+1} + \epsilon)
 \end{equation*}
for $\epsilon > 0$. Finally using that $\epsilon_2 > 0$ we get that
\begin{equation*}
- \int_0^{\epsilon_2} \left(\frac{\partial}{\partial \epsilon}  \tilde{H}\left(\frac{1}{\mathfrak{K}_1 + 1} + \epsilon \right) \right)d \epsilon > - \int_0^{\epsilon_2} \left(\frac{\partial}{\partial \epsilon}  \tilde{H}\left(\frac{1}{\mathfrak{K}_2 + 1} + \epsilon \right) \right)d \epsilon
\end{equation*}
which completes the proof.
\end{proof} \qed

\textbf{Lemma \ref{lem:eps_2}}
If $\epsilon_1 \leq \epsilon_2$ then
\begin{equation*}
\tilde{H}(k_1) + \tilde{H}(k_2) > \tilde{H}(k_1 + \epsilon_1) + \tilde{H}(k_2 - \epsilon_1)  
\end{equation*}

\begin{proof}
Note that since $\tilde{H}(k_2)$ is decreasing and concave for $\mathfrak{K}_2 + 1 \leq \frac{1}{k_2}\leq \mathfrak{K}_2$
\begin{equation*}
\tilde{H}(k_2 - \epsilon_1) - \tilde{H}(k_2)\leq\tilde{H}(\frac{1}{\mathfrak{K}_2} - \epsilon_1) -  \tilde{H}(\frac{1}{\mathfrak{K}_2}) 
\end{equation*}
Define $\mathfrak{K'}_1 = \mathfrak{K}_1 - 1$ if $\frac{1}{k_1} = \mathfrak{K}_1$ otherwise  $\mathfrak{K'}_1 = \mathfrak{K}_1 $. Also note that $\tilde{H}(\frac{1}{\mathfrak{K}}) = ln(\mathfrak{K})$.\\
Using this we can restate the problem
\begin{eqnarray*}
\tilde{H}(\frac{1}{\mathfrak{K}_2} - \epsilon_1) -  ln(\mathfrak{K}_2)  \leq \tilde{H}(\frac{1}{\mathfrak{K'}_1} - \epsilon_1) -  ln(\mathfrak{K}'_1)
\end{eqnarray*}
By defining $s_\mathfrak{K}(\epsilon) =\mathfrak{K} \times \sqrt{\frac{1}{\mathfrak{K}^2} - \frac{\epsilon}{\mathfrak{K}} - \epsilon}$ we can rewrite $\tilde{H}(\frac{1}{\mathfrak{K}} - \epsilon)$ as
\begin{eqnarray*}
\tilde{H}(\frac{1}{\mathfrak{K}} - \epsilon) = \frac{ln\left( -\frac{ s_\mathfrak{K}(\epsilon) - 1}{\mathfrak{K} + 1} \right)(s_\mathfrak{K}(\epsilon) - 1)}{\mathfrak{K} + 1} - \frac{ln\left( \frac{s_\mathfrak{K}(\epsilon) + \mathfrak{K})}{\mathfrak{K}^2 + \mathfrak{K}} \right)(s_\mathfrak{K}(\epsilon) + \mathfrak{K})}{\mathfrak{K} + 1} 
\end{eqnarray*}	
\newpage
Define
\begin{eqnarray*}
f_{\mathfrak{K}}(\epsilon) &= & \tilde{H}(\frac{1}{\mathfrak{K}} - \epsilon) -  ln(\mathfrak{K})
\end{eqnarray*}
\begin{figure}[htp]
\centering
\includegraphics[width=80mm]{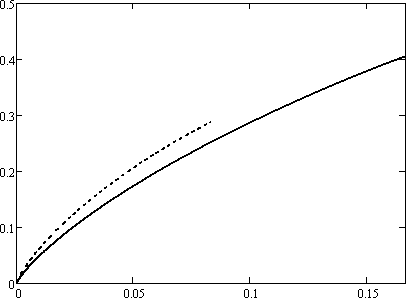}
\caption{\small{Full line: $f_2(\epsilon)$. Dashed line: $f_3(\epsilon)$.}}\label{fig:thefs}
\end{figure}

Using $s_\mathfrak{K}(\epsilon) =\mathfrak{K} \times \sqrt{\frac{1}{\mathfrak{K}^2} - \frac{\epsilon}{\mathfrak{K}} - \epsilon}$ we can define the inverse function for $0 \leq s \leq 1$
\begin{eqnarray*}
\epsilon_{\mathfrak{K}}(s) &=& \frac{1}{\mathfrak{K}^2 + \mathfrak{K}} - \frac{s^2}{\mathfrak{K}^2 + \mathfrak{K}}
\end{eqnarray*}
Note that $\epsilon_{\mathfrak{K}}(s) > \epsilon_{\mathfrak{K} +1}(s)$ except for $s = 1$.\\  
For each two points $f_{\mathfrak{K}}(\epsilon_\mathfrak{K})$ and $f_{\mathfrak{K} +1}(\epsilon_{\mathfrak{K}+1})$ where $s_\mathfrak{K}(\epsilon_{\mathfrak{K}}) = s_{\mathfrak{K} + 1} (\epsilon_{\mathfrak{K}+1}) = s$,  let dL(s) be the derivative of the line between them.
\begin{eqnarray*}
dL(s)& = & \frac{f_\mathfrak{K}(\epsilon_{\mathfrak{K}}(s)) - f_{_\mathfrak{K} + 1}(\epsilon_{\mathfrak{K} +1}(s))}{\epsilon_{\mathfrak{K} }(s) - \epsilon_{\mathfrak{K} + 1}(s)} \\
f_\mathfrak{K}(\epsilon_{\mathfrak{K}}(s)) &=& \left(\frac{ ln\left( -\frac{ s - 1}{\mathfrak{K} + 1} \right)(s - 1)}{\mathfrak{K} + 1} - \frac{ln\left( \frac{s + \mathfrak{K}}{\mathfrak{K}^2 + \mathfrak{K}} \right)(s + \mathfrak{K})}{\mathfrak{K} + 1}  - ln(\mathfrak{K})\right)\\
f_{_\mathfrak{K} + 1}(\epsilon_{\mathfrak{K} +1}(s)) &= &\left(\frac{ ln\left( -\frac{ s - 1}{(\mathfrak{K} + 1) + 1} \right)(s - 1)}{(\mathfrak{K} + 1) + 1} - \frac{ln\left( \frac{s +(\mathfrak{K} + 1)}{(\mathfrak{K} + 1)^2 + (\mathfrak{K} + 1)} \right)(s + (\mathfrak{K} + 1))}{(\mathfrak{K} + 1) + 1}  - ln(\mathfrak{K} + 1)\right)
\end{eqnarray*}
Define $ df(s)$ as the derivative of $f_{\mathfrak{K}}(\epsilon)$ as a function of s, such that $s = s_\mathfrak{K}(\epsilon)$ 
\begin{eqnarray*}
df(s) &=& \frac{d}{d \epsilon} f_{\mathfrak{K}}(\epsilon) = \frac{\partial}{\partial \epsilon} \tilde{H}(\frac{1}{\mathfrak{K}} - \epsilon_{\mathfrak{K}})\\
&=& \frac{\mathfrak{K} \times ln\left(-\frac{\mathfrak{K}(s +1)}{\mathfrak{K} + s} \right)}{2 s}
\end{eqnarray*}
Define $\gamma(s)$ to be the difference between the two derivatives
\begin{eqnarray*}
\gamma(s) = df(s) - dL(s) =\frac{\mathfrak{K} \times \ln\left(-\frac{\mathfrak{K}(s +1)}{\mathfrak{K} + s} \right)}{2 s} -  \frac{f_\mathfrak{K}(\epsilon_{\mathfrak{K}}(s)) - f_{_\mathfrak{K} + 1}(\epsilon_{\mathfrak{K} +1}(s))}{\epsilon_{\mathfrak{K} }(s) - \epsilon_{\mathfrak{K} + 1}(s)} 
\end{eqnarray*}

Below it will be shown that $\gamma(s)$ is a strictly negative function for all s. This will imply that for all values of 
 $\epsilon_1 > 0$ there exists a line between the point $\lbrace \epsilon_1, f_{\mathfrak{K}}(\epsilon_1) \rbrace$ to a point $\lbrace \epsilon_2, f_{\mathfrak{K}+1}(\epsilon_2) \rbrace$ such that $\epsilon_1 > \epsilon_2$ and the derivative of the line is less negative the derivative of $f_{\mathfrak{K}}(\epsilon_1)$. Since $f_{\mathfrak{K}}$ is concave this in turn implies that $f_{\mathfrak{K}}$ will never pass through the point $\lbrace \epsilon_2, f_{\mathfrak{K}+1}(\epsilon_2) \rbrace$. Since this is true for all $\epsilon_1 > 0$ we can conclude that the two lines never cross. In other words theres exists no $\epsilon$ such that $f_{\mathfrak{K}}(\epsilon) \geq  f_{\mathfrak{K}+1}(\epsilon)$ except for $\epsilon = 0$ where they are equal. By simple induction this shows that for any $\mathfrak{K'}_1 > \mathfrak{K}_2$ 
\begin{eqnarray*}
\tilde{H}(\frac{1}{\mathfrak{K}_2} - \epsilon_1) -  ln(\mathfrak{K}_2)  < \tilde{H}(\frac{1}{\mathfrak{K'}_1} - \epsilon_1) -  ln(\mathfrak{K'}_1)
\end{eqnarray*}
for $\epsilon_1 > 0$
which will complete the proof.
\\
It therefore remains to show that $\gamma(s)$ is a strictly positive. This will be done by proving that the slightly different function $\hat {\gamma}(s)$ is strictly positive.
\begin{eqnarray*}
\hat {\gamma}(s) &=&  -\gamma(s) \times (\epsilon_{\mathfrak{K}}(s)- \epsilon_{\mathfrak{K} + 1}(s)) \\
&=& f_\mathfrak{K}(\epsilon_{\mathfrak{K}}(s)) - f_{_\mathfrak{K} + 1}(\epsilon_{\mathfrak{K} +1}(s)) + \frac{\mathfrak{K} \times ln\left(-\frac{\mathfrak{K}(s +1)}{\mathfrak{K} + s} \right)}{2 s}  \times \left(\epsilon_{\mathfrak{K} }(s) - \epsilon_{\mathfrak{K} + 1}(s) \right)
\end{eqnarray*}
This is allowed because $\epsilon_{\mathfrak{K} }(s) - \epsilon_{\mathfrak{K} + 1}(s)$ is a strictly negative function. 
\begin{eqnarray*} 
\hat {\gamma''}(s) =& -&\frac{2 \mathfrak{K}^3 \times s - \mathfrak{K}^3 \times s^2 - 3 \mathfrak{K}^2 \times s^3 + 3 \mathfrak{K}^2 \times s^2}{s^3(s-1)(\mathfrak{K} + s)^2\left(\mathfrak{K}^2 + 3 \mathfrak{K} + 2 \right)(\mathfrak{K} + s + 1)} \\ &+& \frac{4 \mathfrak{K}^4 \times s - 3 \mathfrak{K} s^4 - \mathfrak{K} s^3 + 7 \mathfrak{K} \times s^2 + 2 \mathfrak{K} \times s - 3s^4 + 2 s^3 + 3 s^2}{s^3(s-1)(\mathfrak{K} + s)^2 \left(\mathfrak{K}^2 + 3 \mathfrak{K} + 2 \right)(\mathfrak{K} + s + 1)}\\
&-& \frac{ln \left(-\frac{\mathfrak{K}(s-1)}{\mathfrak{K} + s} \right)\left(2 \mathfrak{K}^3 -2\mathfrak{K}^3 \times s - 6 \mathfrak{K}^2 \times s^2 + 4\mathfrak{K}^2 \times s - 2 s^4 + 2s^2 \right)}{s^3(s-1)(\mathfrak{K} + s)^2\left(\mathfrak{K}^2 + 3 \mathfrak{K} + 2 \right)(\mathfrak{K} + s + 1)}	
\end{eqnarray*}
where $\hat {\gamma''}(s)$ is the 2nd derivative of $\hat {\gamma}(s)$.
Note that $$s^3(s-1)(\mathfrak{K} + s)^2\left(\mathfrak{K}^2 + 3 \mathfrak{K} + 2 \right)(\mathfrak{K} + s + 1) < 0$$ for $s < 1$. Define the alternative function
\begin{eqnarray*} 
\hat{\hat{\gamma''}}(s) &=& \hat {\gamma''}(s) \times (s^3(s-1)(\mathfrak{K} + s)^2\left(\mathfrak{K}^2 + 3 \mathfrak{K} + 2 \right)(\mathfrak{K} + s + 1))\\ & =& -2 \mathfrak{K}^3 \times s - \mathfrak{K}^3 \times s^2 - 3 \mathfrak{K}^2 \times s^3 + 3 \mathfrak{K}^2 \times s^2 \\ &+& 4 \mathfrak{K}^4 \times s - 3 \mathfrak{K} s^4 - \mathfrak{K} s^3 + 7 \mathfrak{K} \times s^2 + 2 \mathfrak{K} \times s - 3s^4 + 2 s^3 + 3 s^2\\
&-& ln \left(-\frac{\mathfrak{K}(s-1)}{\mathfrak{K} + s} \right)\left(2 \mathfrak{K}^3 -2\mathfrak{K}^3 \times s - 6 \mathfrak{K}^2 \times s^2 + 4\mathfrak{K}^2 \times s - 2 s^4 + 2s^2 \right)	
\end{eqnarray*} 
Define $\hat{\hat{\gamma'''}}(s), \hat{\hat{\gamma''''}}(s), \hat{\hat{\gamma''''}}(s), \hat{\hat{\gamma''''}}(s)$ as the 1st, 2nd, 3rd and 4th derivative of $\hat{\hat{\gamma''}}(s)$, respectively.
\begin{eqnarray*} 
\hat{\hat{\gamma'''}}(s) &=& 2 \mathfrak{K}^3 \times s + 9\mathfrak{K}^2 \times s^2 - 2 \mathfrak{K}^2 \times s + 5 \mathfrak{K} \times s^2 - 8 \mathfrak{K} \times s + 12 s^3 - 4s^2 -4s \\
&+& ln \left(-\frac{\mathfrak{K}(s-1)}{\mathfrak{K} + s} \right) \times \left( 2\mathfrak{K}^3 + 12\mathfrak{K}^2 \times s - 4 \mathfrak{K}^2 + 18\mathfrak{K} \times s^2 - 4 \mathfrak{K} \times s - 4 \mathfrak{K} + 8 s^3 - 4s \right)
\end{eqnarray*} 
\begin{eqnarray*} 
\hat{\hat{\gamma''''}}(s) &=& \frac{2 \mathfrak{K}^ 3 \times s + 18 \mathfrak{K}^2 \times s^2 - 10 \mathfrak{K}^2 \times s + 36 \mathfrak{K} \times s^3 - 18 \mathfrak{K} \times s^2 - 8 \mathfrak{K} \times s + 36 s^3 - 36 s^2 + 8 s -4}{s-1} \\
&+&  ln \left(-\frac{\mathfrak{K}(s-1)}{\mathfrak{K} + s} \right) \left( \frac{12 \mathfrak{K}^2 \times s - 12 \mathfrak{K}^ 2 + 36 \mathfrak{K} \times s^2 -40\mathfrak{K}\times s +4\mathfrak{K} + 24s^3 - 24 s^2 4s +4}{s-1} \right)
\end{eqnarray*} 
\begin{eqnarray*} 
 \hat{\hat{\gamma'''''}}(s) &=& \frac{18 \mathfrak{K}^3 \times s^2 - 2 \mathfrak{K}^4 -26\mathfrak{K}^3 \times s - 2 \mathfrak{K}^3 + 90\mathfrak{K}^2 \times s^3 - 126\mathfrak{K}^2 \times s^2}{(s-1)^2(\mathfrak{K} + s)}\\ &+& \frac{18 \mathfrak{K}^2 \times s + 72 \mathfrak{K} \times s^4 - 30 \mathfrak{K} \times s^2 + 36 \mathfrak{K} \times s + 4 \mathfrak{K} + 72 s^4 -120 s^3 +48 s^2 - 8s +4}{(s-1)^2(\mathfrak{K} + s)}\\
&+& ln \left(-\frac{\mathfrak{K}(s-1)}{\mathfrak{K} + s} \right) \\ &\times& \left( \frac{36 \mathfrak{K}^2 \times s^2 - 72 \mathfrak{K}^2 \times s + 36 \mathfrak{K}^2 + 84 \mathfrak{K} \times s^3 - 168 \mathfrak{K} \times s^2 +84 \mathfrak{K} \times s + 48 s^4 - 96 s^3  + 48 s^2}{(s-1)^2(\mathfrak{K} + s)} \right)
\end{eqnarray*} 

For the next part assume $\mathfrak{K} \geq 2$. The case for $\mathfrak{K} = 1$ will be handled as a special case at the end of the proof.
Note that $(s-1)^2(\mathfrak{K} + s) \geq 0$. Define the two functions
\begin{eqnarray*} 
\vartheta_1(s)& =& \left(36 \mathfrak{K}^2 \times s^2 - 72 \mathfrak{K}^2 \times s + 36 \mathfrak{K}^2 + 84 \mathfrak{K} \times s^3 - 168 \mathfrak{K} \times s^2 +84 \mathfrak{K} \times s + 48 s^4 - 96 s^3  + 48 s^2\right)\\
\vartheta_2(s)& =& 18 \mathfrak{K}^3 \times s^2 - 2 \mathfrak{K}^4 -26\mathfrak{K}^3 \times s - 2 \mathfrak{K}^3 + 90\mathfrak{K}^2 \times s^3 - 126\mathfrak{K}^2 \times s^2\\ &+& 18 \mathfrak{K}^2 \times s + 72 \mathfrak{K} \times s^4 - 30 \mathfrak{K} \times s^2 + 36 \mathfrak{K} \times s + 4 \mathfrak{K} + 72 s^4 -120 s^3 +48 s^2 - 8s +4
\end{eqnarray*} 
We have that
\begin{equation*}
\vartheta_1(0) = 36 \mathfrak{K}^2
\end{equation*}
Solving $\vartheta_1(s) = 0$ gives the following three solutions
\begin{eqnarray*} 
s& = &- i\\
s &=& - \frac{3 i}{4}\\
s &= &1
\end{eqnarray*} 

It can therefore be concluded that $\vartheta_1(s) \geq 0$ for $0 \leq s \leq 1$. Using that $ln \left(-\frac{\mathfrak{K}(s-1)}{\mathfrak{K} + s} \right) \leq 0$ we get that $ln \left(-\frac{\mathfrak{K}(s-1)}{\mathfrak{K} + s} \right) \times \vartheta_1(s) \leq 0$. \\
Taking the derivative of $\vartheta_2(s)$ gives the following equations
\begin{eqnarray*}
\vartheta_2'(s) = 2(\mathfrak{K}+1)\left(144 s^3 + 135 s^2 \times \mathfrak{K} - 180 s^2 + 18 s \times \mathfrak{K} - 144 \mathfrak{K}^2 \times s +48 s - 13 \mathfrak{K}^2 + 22 \mathfrak{K} - 4 \right)\\
\vartheta_2''(s) = 12(\mathfrak{K}+1)\left(72 \mathfrak{K} + 45 \mathfrak{K} \times s - 60 s + 3 \mathfrak{K}^2 -24 \mathfrak{K} + 8 \right) \\
\vartheta_2'''(s) = 12(\mathfrak{K}+1)\left(144 s + 45 \mathfrak{K} - 60\right) 
\end{eqnarray*} 
For $\mathfrak{K} \geq 2$ $\vartheta_2'''(s) > 0$. This means that there exists at most one value for s such that $\vartheta_2''(s) = 0$. It is straight forward to see that 
\begin{eqnarray*}
\vartheta_2'(0) = -26 \mathfrak{K}^3 + 18 \mathfrak{K}^2 + 36 \mathfrak{K} - 8 < 0\\
\vartheta_2'(1) = 10 \mathfrak{K}^3 + 36 \mathfrak{K}^2 + 42 \mathfrak{K} +16 > 0
\end{eqnarray*}
Putting it together we can conclude that there only exists exactly one value for s such that $\vartheta_2'(s) = 0$.\\
Again, it is straight forward to see that 
\begin{eqnarray*}
\vartheta_2(0) = -2 \mathfrak{K}^3  - 2 \mathfrak{K}^2 + 4 \mathfrak{K} + 4< 0\\
\vartheta_2(1) = -10 \mathfrak{K}^3 - 18 \mathfrak{K}^2 -2 \mathfrak{K} -4 < 0
\end{eqnarray*}
Together with the fact that $\vartheta_2(s)$ is decreasing when $s=0$ we can conclude that $\vartheta_2(s) < 0$ for all $0 \leq s \leq 1$ and therefore $\hat{\hat{\gamma'''''}}(s) < 0$ for all $0 \leq s \leq 1$. \\
Note that $ln \left(-\frac{\mathfrak{K}(0-1)}{\mathfrak{K} + 0} \right) = 0$.
\begin{equation*}
\hat{\hat{\gamma''''}}(0) = 0
\end{equation*}
Since $\hat{\hat{\gamma'''''}}(s) < 0 $ for all $0 \leq s \leq 1$ we can conclude that  $\hat{\hat{\gamma''''}}(s) \leq 0 $ for all $0 \leq s \leq 1$.
\begin{equation*}
\hat{\hat{\gamma'''}}(0) = 0
\end{equation*}
Since $\hat{\hat{\gamma''''}}(s) \leq 0 $ for all $0 \leq s \leq 1$ we can conclude that  $\hat{\hat{\gamma'''}}(s) \leq 0 $ for all $0 \leq s \leq 1$.
\begin{equation*}
\hat{\hat{\gamma''}}(0) = 0
\end{equation*}
Since $\hat{\hat{\gamma'''}}(s) \leq 0 $ for all $0 \leq s \leq 1$ we can conclude that  $\hat{\hat{\gamma''}}(s) \leq 0 $ for all $0 \leq s \leq 1$. Which in turn means that ${\hat{\gamma''}}(s) \geq 0 $ for all $0 \leq s \leq 1$. $\hat{\gamma}(s)$ is therefore a convex function. It must be at its maximum at its boundaries, that is $s=0$ or $s=1$. \\
Using l'H$\hat{\text{o}}$pital's rule we can show that
\begin{eqnarray*}
\hat{\gamma}(s\rightarrow 1) = 0\\
\end{eqnarray*}

\begin{eqnarray*}
\hat{\gamma}(s \rightarrow 0) = ln \left( \frac{(\mathfrak{K}+1)^2}{\mathfrak{K}(\mathfrak{K}+2)} \right ) - \frac{1}{\mathfrak{K}(\mathfrak{K}+2)}
\end{eqnarray*}
For $\mathfrak{K} = 2$ we see that $\hat{\gamma}(s \rightarrow 0) = ln \left( \frac{9}{8} \right) - \frac{1}{8} < 0$. 
Taking the derivative with respect to $\mathfrak{K}$ gives
\begin{eqnarray*}
\frac{d}{d \mathfrak{K}} ln \left( \frac{(\mathfrak{K}+1)^2}{\mathfrak{K}(\mathfrak{K}+2)} \right ) - \frac{1}{\mathfrak{K}(\mathfrak{K}+2)} = \frac{2}{\mathfrak{K}^2(\mathfrak{K}+1)(\mathfrak{K}+2)^2}
\end{eqnarray*}
So the function is monotonically increasing in $\mathfrak{K}$. See that both $ln \left( \frac{(\mathfrak{K}+1)^2}{\mathfrak{K}(\mathfrak{K}+2)} \right )$ and $\frac{1}{\mathfrak{K}(\mathfrak{K}+2)}$ goes to zero as $\mathfrak{K} \rightarrow \infty$. Since the function was negative for $\mathfrak{K} = 2$ it must be negative for all finite values of $\mathfrak{K}$. \\
Finally this implies that $\gamma(s) < 0$ for all $0 \leq s < 1$ as required and this completes the Lemma for $\mathfrak{K} \geq 2$.\\
For the rest of the proof $\mathfrak{K} = 1$
\begin{eqnarray*}
 \hat{\hat{\gamma'''''}}(s) = ln\left( \frac{s - 1}{s +1} \right) \left( \frac{48 s^4  - 12 s^3 - 84 s^2  + 12 s +36}{(s-1)^2(s+1)} \right) - \frac{+144 s^4 -60 s^3 - 156 s^2 + 20 s + 4}{(s-1)^2(s+1)}
\end{eqnarray*}
Take the derivative
 
 \begin{eqnarray*}
 \hat{\hat{\gamma''''''}}(s) &=& ln\left( \frac{s - 1}{s +1} \right) \left( \frac{48 s^5  - 48 s^4 - 96 s^3  + 96 s^2 +48 s - 48}{(s-1)^3(s+1)^2} \right) \\ &+& \frac{144 s^5 -48 s^4 - 384 s^3 + 128 s^2 + 304 s + 48}{(s-1)^3(s+1)^2}
\end{eqnarray*}
First we have that $-384 s^3 + 128 s^2 + 304 s > 0$,  $-48 s^4 + 48 > 0$ and $(s-1)^3(s+1)^2 < 0$ for $0 \leq s < 1$ which means $\frac{144 s^5 -48 s^4 - 384 s^3 + 128 s^2 + 304 s + 48}{(s-1)^3(s+1)^2} < 0$  for $0\leq s < 1$.\\
Define the function
\begin{eqnarray*}
\eta(s) = \frac{48 s^5  - 48 s^4 - 96 s^3  + 96 s^2 +48 s - 48}{48} = s^5 - s^4 - 2s^3 + 2s^2 + s -1 = (s-1)^3(s+1)^2
\end{eqnarray*}
Finally this implies that $\hat{\hat{\gamma''''''}}(s) < 0$ for $0\leq s < 1$.\\
So $\hat{\hat{\gamma'''''}}(s)$ is monotonically decreasing. 
\begin{eqnarray*}
{\hat{\hat{\gamma'''''}}(0)} = 4 \\
{\hat{\hat{\gamma'''''}}(0.9)} < 0
\end{eqnarray*}
Let $0 \leq s'''''_0 < 1$ be the point where $\hat{\hat{\gamma'''''}}(s'''''_0) = 0$. Therefore $s'''''_0 < 0.9$.\\
\begin{eqnarray*}
{\hat{\hat{\gamma''''}}(0)} = 0 \\
{\hat{\hat{\gamma''''}}(0.9)} < 0
\end{eqnarray*}
Let $0 \leq s''''_0 < 1$ be the point where $\hat{\hat{\gamma''''}}(s''''_0) = 0$. Therefore $s''''_0 < 0.9$.\\
\begin{eqnarray*}
{\hat{\hat{\gamma'''}}(0)} = 0 \\
{\hat{\hat{\gamma'''}}(0.9)} < 0
\end{eqnarray*}
Let $0 \leq s'''_0 < 1$ be the point where $\hat{\hat{\gamma'''}}(s'''_0) = 0$. Therefore $s'''_0 < 0.9$.\\
\begin{eqnarray*}
{\hat{\hat{\gamma''}}(0)} = 0 \\
{\hat{\hat{\gamma''}}(0.9)} < 0
\end{eqnarray*}
Let $0 \leq s''_0 < 1$ be the point where $\hat{\hat{\gamma''}}(s''_0) = 0$. Therefore $s''_0 < 0.9$.\\
So this means that  ${\hat{\hat{\gamma''}}(s)}$ is positive until some point $s = s''_0$ from where it is negative and decreasing. This in turn means $\hat{\gamma''}(s)$ must be negative until some point $s = s''_0$ from where on it is positive for $s < 1$. In particular it has exactly one point in which it is zero. Using l'H$\hat{\text{o}}$pital's rule we get that
\begin{eqnarray*}
\hat{\gamma'}(s \rightarrow 0) = 0 \\
\end{eqnarray*}
Which means $\hat{\gamma'}(s)$ must start out negative by decreasing from zero. At the point $s = s''_0$ it will start increasing. Using
\begin{eqnarray*}
\hat{\gamma'}(0.9) > 0 \\
\end{eqnarray*}
we can hence conclude that $\hat{\gamma}(s)$ is first decreasing and then at some point start increasing. The maximum of $\hat{\gamma}(s)$ can therefor be found at its boundaries. By the discussion earlier we can finally conclude that this implies that $\gamma(s) < 0$ for all $0 \leq s < 1$ as required and this completes the Lemma for all $\mathfrak{K} \geq 1$.
\end{proof} \qed

\section{ Entropic Uncertainty Relations for MUBs in prime power dimensions} \label{ch:entMUB}
In the previous section an improved relation between the collision probability and the Shanon entropy of a set of probability distributions was established. Given the discussion in section \ref{knownrelations} the connection to entropic uncertainty relations is straight-forward. Here the new relation for prime power dimensions will be derived. \\

\begin{theorem} \label{theo:primepower}
Let $A_1, ..., A_{M}$ be $M \leq d + 1$ mutually unbiased observables for a $d$ dimensional quantum system, where $d$ is a prime power. Then
\begin{eqnarray}
\displaystyle\sum_{i=1}^{M} H(A_i)&\geq& \Phi \tilde{H}(k_{min}) + (M-1-\Phi)\tilde{H}(k_{max}) \label{NewEntRelation} 
\\ &+& \tilde{H}\left(\frac{d + M - 1}{d} - \Phi k_{min} - (M-1-\Phi)k_{max}\right)  \nonumber
\end{eqnarray}
where 
\begin{eqnarray*} 
\Phi =  \left \lfloor \frac{\frac{d + M - 1}{d}-M\times k_{max}}{k_{min} - k_{max}}\right \rfloor \\
k_{min} = \frac{1}{\left\lceil\frac{M\times d}{d + M - 1}\right\rceil}\\
k_{max} = \frac{1}{\left\lfloor\frac{M\times d}{d + M - 1}\right\rfloor}
\end{eqnarray*}
and $\tilde{H}$ is defined by equation (\ref{eq:H_tilde}).
\end{theorem}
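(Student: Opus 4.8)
The plan is to read Theorem~\ref{theo:primepower} as the specialisation of the purely classical Lemma~\ref{EntMulti} to the case where the collision-probability budget $k_{tot}$ is the quantum bound furnished by Corollary~\ref{corol:colbound}. On this view essentially no new analysis is required: the work reduces to choosing the correct $k_{tot}$, checking that the hypotheses of Lemma~\ref{EntMulti} are met, and matching the parameters $k_{min}$, $k_{max}$ and $\Phi$.

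First I would set $k_{tot} = \frac{d+M-1}{d}$ and regard each observable $A_m$ as producing a classical distribution over its $d$ measurement outcomes, so that every $d_i$ in Lemma~\ref{EntMulti} equals $d$. Corollary~\ref{corol:colbound} gives $\sum_{m=1}^{M} IC(A_m) \le k_{tot}$, which is exactly the collision constraint the lemma demands. This budget is the right one precisely because the bases are mutually unbiased in a prime-power dimension: Larsen's identity (Lemma~\ref{lem:larsen}) forces $\sum_{m=1}^{d+1} IC(A_m) = \mathrm{Tr}(\rho^2)+1 \le 2$, and lower-bounding the $d+1-M$ discarded terms by $\frac{1}{d}$ each yields exactly Corollary~\ref{corol:colbound}.

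Second, I would verify the size hypothesis~(\ref{eq:SizeConstraint}). With all $d_i = d$ this asks that $\left\lceil M/k_{tot}\right\rceil = \left\lceil \frac{Md}{d+M-1}\right\rceil \le d$. Since $\frac{d}{dM}\frac{Md}{d+M-1} = \frac{d(d-1)}{(d+M-1)^2} \ge 0$, the quantity $\frac{Md}{d+M-1}$ is nondecreasing in $M$, so over the admissible range $M \le d+1$ its maximum is attained at $M=d+1$ and equals $\frac{d+1}{2}$; as $\left\lceil \frac{d+1}{2}\right\rceil \le d$ for every $d \ge 1$, the hypothesis holds, and it holds \emph{because} $M \le d+1$. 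I expect this to be the only genuinely substantive check, so it is the main (and mild) obstacle in an otherwise bookkeeping argument.

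With both hypotheses confirmed, Lemma~\ref{EntMulti} applies verbatim and yields $\sum_{m=1}^{M} H(A_m) \ge \sum_{i=1}^{M} H(P'_{X_i})$, the latter being the right-hand side~(\ref{bigEnt}). It then remains only to observe that substituting $k_{tot}=\frac{d+M-1}{d}$ into the lemma's definitions of $k_{min}$, $k_{max}$ and $\Phi$ reproduces the displayed expressions of the theorem, turning~(\ref{bigEnt}) into~(\ref{NewEntRelation}). The one logical subtlety worth flagging is that we need only a lower bound: the quantum-induced distributions need not realise the classical minimiser $\{P'_{X_i}\}$, but since they satisfy the collision constraint, their summed Shannon entropy cannot drop below the constrained classical minimum computed by Lemma~\ref{EntMulti}, which is precisely what~(\ref{NewEntRelation}) asserts.
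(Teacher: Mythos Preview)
Your proposal is correct and follows exactly the route the paper takes: the paper's proof is the single line ``This follows directly from Corollary~\ref{corol:colbound} and Lemma~\ref{EntMulti}.'' Your version is in fact more thorough, since you explicitly verify the size hypothesis~(\ref{eq:SizeConstraint}) that the paper silently assumes; the monotonicity argument in $M$ and the evaluation at $M=d+1$ are a clean way to dispatch it.
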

\begin{proof}
This follows directly from Corollary \ref{corol:colbound} and Lemma \ref{EntMulti}.\\
\end{proof} \qed
In order to visualize the difference between (\ref{lem:azer}) and (\ref{NewEntRelation}) it is advised to look at Figure \ref{fig:Htilde}. It compares the relation between collision probability and Shannon entropy (see eg. ~\cite{HT01}) on which the two relations depend. \\
\newpage
\bibliographystyle{alpha}	
\bibliography{qip,crypto,procs,mine,local}	
\end{document}